\newcommand{\given}{\,|\,}
\newcommand{\prob}[2][]{\text{\bf Pr}\ifthenelse{\not\equal{}{#1}}{_{#1}}{}\!\left[{\def\givenn{\middle|}#2}\right]}
\newcommand{\expect}[2][]{\text{\bf E}\ifthenelse{\not\equal{}{#1}}{_{#1}}{}\!\left[{\def\givenn{\middle|}#2}\right]}
\newcommand{\tparen}{\big}
\newcommand{\tprob}[2][]{\text{\bf Pr}\ifthenelse{\not\equal{}{#1}}{_{#1}}{}\tparen[{\def\given{\tparen|}#2}\tparen]}
\newcommand{\texpect}[2][]{\text{\bf E}\ifthenelse{\not\equal{}{#1}}{_{#1}}{}\tparen[{\def\given{\tparen|}#2}\tparen]}
\newcommand{\sprob}[2][]{\text{\bf Pr}\ifthenelse{\not\equal{}{#1}}{_{#1}}{}[#2]}
\newcommand{\sexpect}[2][]{\text{\bf E}\ifthenelse{\not\equal{}{#1}}{_{#1}}{}[#2]}
\newcommand{\lbr}[1]{\left\{#1\right\}}
\newcommand{\rbr}[1]{\left(#1\right)}
\newcommand{\dd}{{\,\mathrm d}}
\newcommand{\abs}[1]{\left|#1\right|}
\newcommand{\reals}{\mathbb{R}}
\newcommand{\posreals}{\reals_+}
\theoremstyle{definition}
\newtheorem{definition}{Definition}
\newtheorem{assumption}{Assumption}
\newtheorem{example}{Example}
\theoremstyle{plain}
\newtheorem{theorem}{Theorem}
\newtheorem{lemma}{Lemma}
\newtheorem{proposition}{Proposition}
\theoremstyle{remark}
\newtheorem{claim}{Claim}
\DeclareMathOperator*{\argmax}{arg\,max}
\newcommand{\alloc}{Q}
\newcommand{\allocs}{\boldsymbol{Q}}
\newcommand{\signalalloc}{x}
\newcommand{\signalallocs}{\boldsymbol{x}}
\newcommand{\mechalloc}{y}
\newcommand{\mechallocs}{\boldsymbol{y}}
\newcommand{\contestalloc}{q}
\newcommand{\contestallocs}{\boldsymbol{q}}
\newcommand{\efficientalloc}{\alloc_{\rm E}}
\newcommand{\allocspace}{X}
\newcommand{\type}{\theta}
\newcommand{\types}{\boldsymbol{\theta}}
\newcommand{\typespace}{\Theta}
\newcommand{\Typespace}{\boldsymbol{\typespace}}
\newcommand{\reporttype}{\hat{\theta}}
\newcommand{\reporttypes}{\boldsymbol{\hat{\theta}}}
\newcommand{\signal}{s}
\newcommand{\signals}{\boldsymbol{\signal}}
\newcommand{\signalspace}{S}
\newcommand{\Signalspace}{\boldsymbol{\signalspace}}
\newcommand{\reportsignal}{\hat{s}}
\newcommand{\generalsignal}{\tilde{s}}
\newcommand{\generalsignals}{\boldsymbol{\generalsignal}}
\newcommand{\util}{U}
\newcommand{\utils}{\boldsymbol{U}}
\newcommand{\expostutil}{u}
\newcommand{\dist}{F}
\newcommand{\dists}{\boldsymbol{F}}
\newcommand{\cost}{c}
\newcommand{\dista}{G}
\newcommand{\signalrecommend}{\tilde{\signal}}
\newcommand{\signalrecommends}{\tilde{\signals}}
\newcommand{\primed}{^\dagger}
\newcommand{\ranking}{r}
\newcommand{\tie}{z}
\newcommand{\ce}{\Sigma}
\providecommand{\keywords}[1]
{
  \small	
  \textbf{\textit{Keywords---}} #1
}
\providecommand{\JEL}[1]
{
  \small	
  \textbf{\textit{JEL---}} #1
}
\title{Mechanism Design under Costly Signaling:\\ the Value of Non-Coordination\thanks{We thank Asher Wolinsky, Bruno Strulovici, and Wojciech Olszewski for advice, helpful conversations, and comments. We also thank Ian Ball, Dirk Bergemann, Alex Bloedel, Simon Board, Eddie Dekel, Piotr Dworczak, Jeff Ely, Kira Goldner, Carl-Christian Groh, Yingni Guo, Marina Halac, Andrei Iakovlev, Annie Liang, Bart Lipman, Brendan Lucier, Deniz Kattwinkel, Joshua Mollner, Kyohei Okumura, Alessandro Pavan, Marcin Pęski,  Abhishek Sarkar, James Schummer,  Philipp Strack, Matthew Thomas, Yiyao Zhu, and participants at the UCLA Theory Seminar, 34th Stony Brook International Conference on Game Theory, Australasian Economic Theory Week, NUS Theory Lunch for helpful suggestions, comments, and discussions. Yingkai Li acknowledges financial support from NUS Start-up Grant. Part of the work was done while Yingkai Li was a Postdoc at Yale University under the support of the Sloan Research Fellowship, grant no.~FG-2019-12378.}}
\author{Yingkai Li\thanks{Department of Economics, National University of Singapore.
Email: \texttt{yk.li@nus.edu.sg}}
\and
Xiaoyun Qiu\thanks{Department of Economics, Dartmouth College.
Email:~\texttt{xiaoyun.qiu@dartmouth.edu}}}
\date{}
\begin{document}
\onehalfspacing

\maketitle

\begin{abstract}
We study allocation mechanisms that utilize costly signaling as a screening tool. A social planner aims to maximize social welfare, defined as the weighted sum of agents' utilities, while implementing a specific allocation rule. Within a broad class of agent preferences, we show that coordination mechanisms (where recommended signals depend on joint reports) can be outperformed by non-coordination mechanisms (where signals depend solely on individual reports). We formalize the conditions under which the optimal mechanism features no coordination and demonstrate that such mechanisms are implementable through coarse-ranking contests. 
\end{abstract}

\keywords{mechanism design, costly signaling,  non-coordination, information leakage, coarse ranking contests.}

\JEL{D47, D61, D82}

\section{Introduction}
\label{sec:intro}
In many economic environments, scarce resources are allocated on the basis of informative but costly signals. When the resource itself is money, or when transfers are legally or ethically infeasible, monetary payments cannot serve as screening instruments. In contrast to the classic signaling framework of \citet{spence1973job}, we adopt a mechanism design perspective in which a central planner owns the resources and allocates them according to agents’ signals. In this setting, competition for limited resources induces agents to expend effort to generate signals, potentially leading to socially excessive signaling and a reduction in social welfare, defined as the weighted sum of agents’ utilities. We therefore study the design of a mechanism that maximizes social welfare while implementing a given allocation rule.

This question is central to the literature on money-burning \citep[e.g.,][]{hartline2008, condorelli2012money, chakravarty2013optimal}. We highlight a fundamental yet largely overlooked tradeoff created by coordination. Because the principal can communicate with all agents, she may improve welfare by coordinating their signaling decisions, for example, by advising likely losers to refrain from costly effort. However, such coordination requires credible identification of the winner, which in turn may necessitate a more stringent, and thus more costly, signal from that agent. Coordination can therefore reduce waste on one margin while exacerbating it on another.

In standard quasi-linear environments, this tradeoff is essentially absent. By the payoff equivalence result of  \citep{myerson1981}, it is without loss of optimality to restrict attention to non-coordinated all-pay mechanisms. We move beyond these canonical assumptions and ask: under which preference structures does this coordination tradeoff become substantive? Given the apparent complexity of coordination mechanisms, we further ask: when can optimal welfare be achieved within the simpler class of non-coordinated mechanisms?

We answer these questions by studying a broad class of preferences for which coordination generates a genuinely non-trivial welfare tradeoff. These preferences can capture diverse signaling technologies and heterogeneous valuations. This includes classic sorting conditions where higher types face lower marginal costs \citep{spence1973job},
as well as settings where signaling costs are designable and function similarly to monetary transfers \citep{chakravarty2013optimal} or type-independent \citep{yang2024comparison}.
A prominent class of signaling technologies covered by our model is one where agents can reveal their true type for free but must incur private costs to inflate their signal.
Such signaling technologies accurately reflect various real-world economic frictions. For instance, the signaling cost may represent a lying cost in public housing allocation, where applicants either truthfully present financial evidence or exert effort to underreport income, such as concealing assets, to meet eligibility thresholds \citep{gao2005}. Alternatively, it may capture the phenomenon of head starts in college admissions, where students with greater talent or access to tutoring resources exert less effort to achieve the same standardized test scores as their peers. Finally, signaling costs can represent financing costs in performance-based procurement, where a firm must pay to issue debt above its current cash holdings to bolster its competitive position.

 We characterize the conditions under which optimal welfare can (only) be achieved by non-coordinated mechanisms.
 Our general result relies crucially on the increasing certainty equivalence assumption (ICE; \cref{asp:monotone_certainty_equivalence}).
ICE requires that higher types can generate higher deterministic equivalents of any stochastic signal distributions.
It is a reasonable assumption in many applications: higher types are more deserving of allocation, and they are more willing or capable of generating a higher deterministic equivalent signal for removing the randomness in signal recommendations. 
Under ICE, the welfare-maximizing mechanism falls within the class of non-coordination mechanisms (\cref{thm:monotone optimal general}).
When the strict version of ICE holds, coordination mechanisms are strictly sub-optimal.
This implies that in many realistic settings, there is no value in using coordination mechanisms. Given the complexity of implementing any coordination mechanism, this finding is reassuring: Under a large class of preferences where the tradeoff involving coordination is non-trivial, we can still safely focus on the simpler class of non-coordination mechanisms without sacrificing welfare.

Our primary technical challenge stems from the failure of payoff equivalence. In our flexible preference structure, different mechanisms implementing the same allocation rule can yield different utility levels for agents, even when the lowest type's utility is held constant. This enlarges the space of candidate mechanisms significantly. To circumvent this challenge, we adopt a constructive approach. To coordinate agents' signaling choices, the principal essentially makes stochastic signal recommendations that depend on other agents' reports. In contrast, a non-coordination mechanism makes deterministic recommendations. We show that given a stochastic mechanism, we can always construct a deterministic one that implements the same allocation but increases every agent's utility under ICE.

To illustrate how implementation choice impacts welfare and the constructive idea, we compare a VCG-style mechanism (which coordinates signals based on joint reports) with a Winner-Takes-All (WTA) contest (which does not).

\begin{example}
Consider a simplified setting (generalized in \cref{sec:model}) involving two agents and a principal who allocates one item based on public signals. 
Each agent has a private type $\type_i$, drawn i.i.d.~from the discrete uniform distribution with binary support $\{0,\frac{1}{2}\}$. Agent~$i$ produces a signal $\signal_i \in [0,\infty)$ at cost $\cost_i = \frac{1}{2}(\signal_i - \type_i)^2$ for $\signal_i > \type_i$ and $\cost_i = 0$ if $\signal_i \leq \type_i$. Producing a signal below one's type is free, while inflating the signal incurs a quadratic cost. Each agent values the item at 1, so utility is given by $\expostutil_i = \signalalloc_i - \cost_i$, where $\signalalloc_i$ denotes the probability of receiving the item. The principal's goal is to allocate the item to the agent with the higher type. Ties are broken uniformly at random.

To achieve efficient allocation, consider a VCG-style mechanism where each agent $i$ reports a type $\reporttype_i$.
Let $i^* = \argmax_i \reporttype_i$ denote the agent with the higher reported type.
The principal recommends that the winning agent $i^*$ generate a signal $\signal_{i^*} = \sqrt{2}$ if $\type_{i^*}=\frac12$, or a signal $\signal_{i^*} = 0$ if $\type_{i^*}=0$.\footnote{The signal $\sqrt{2}$ is the cutoff signal that deters an agent with type $0$ from mimicking type $1$ across all realized type profiles.}  The losing agent is recommended to generate a signal 0.
The item is awarded to agent $i^*$ if and only if their signal meets or exceeds the recommendation. Using standard arguments, one can verify that this mechanism allocates the item to the higher type in equilibrium.
The expected utility of an agent with type $\frac12$ is $\frac{3}{4}\cdot(1 - \frac{1}{2}(\sqrt{2}-\frac{1}{2})^2) \approx 0.437<\frac12$ (and is $\frac{1}{4}\cdot(1 -0) = \frac14$ for type $0$), where $\frac{3}{4}$ is the probability that type $\frac{1}{2}$ wins an item under uniform tie breaking. 
Intuitively, the winning agent must “burn” utility to credibly signal superiority.

In contrast, a WTA contest achieves efficient allocation with lower signaling costs. To see this, notice that in this mechanism, the agents are required to generate costly signals regardless of whether they win or not, and the item is awarded to the agent with the higher signal. In equilibrium, type $0$ generates a signal $0$, and type~$\frac12$ generates a signal $1$. The expected utility of an agent with type $\frac12$ is $\frac{3}{4} - \frac{1}{2}(1-\frac{1}{2})^2 = \frac{5}{8}>\frac12$, and the expected utility of an agent with type $0$ is $\frac{1}{4}$.
Deviating by inflating the signal is unprofitable for type $0$: conditional on the other agent playing truthfully, the payoff from mimicking a higher type is $\frac{3}{4} - \frac{1}{2}(1-0)^2 = \frac{1}{4}$. Thus, the item is allocated to the higher type, and the high-type agent has a strictly higher utility in the WTA contest than in the VCG-style mechanism.

To illustrate the intuition, we first point out the key distinction between these two mechanisms: in the WTA contest, recommendations are deterministic functions of each agent's own type, while in the VCG-style mechanism, recommendations are stochastic and depend on others' reports. The VCG-style mechanism coordinates signals so that only the winner expends effort, saving the loser's cost. However, this coordination raises incentive costs: the winner's signal must be sufficiently high to deter imitation.

To be more concrete, the high type $\frac12$ receives a stochastic signal recommendation in the VCG-style mechanism: with probability $\frac34$, they are picked as the potential winner, and the recommended signal is $\sqrt{2}$; with probability $\frac14$, they are picked as the potential loser, and the recommended signal is $0$. 
Hence, under this stochastic recommendation, the certainty equivalent signal for the high type $\frac12$, denoted by $\bar{s}$, is greater than that for the low type $0$, denoted by $\underline{s}$.\footnote{The high and low type's certainty equivalent signal $\bar{s}$ and $\underline{s}$  satisfy $\frac12(\bar{s}-\frac12)^2 = \frac34\cdot\frac12 (\sqrt{2}-\frac12)^2$ and $\frac12(\underline{s}-0)^2 = \frac34\cdot\frac12 (\sqrt{2}-0)^2$, respectively. We therefore have $\bar{s}=\sqrt{\frac32}+\frac12(1-\frac{\sqrt{3}}{2})$ and $\underline{s}=\sqrt{\frac32}$.}

Let us now consider an alternative mechanism where the high type is recommended to produce a deterministic signal $\bar{s}$, and the low type is recommended to produce a deterministic signal $0$, regardless of the type of another agent. 
By construction, the utility for the high type is the same under the VCG-style mechanism and the alternative mechanism. 
Now consider the low type's utility from mimicking the high type: since $\bar{s}>\underline{s}$, it has strictly lower utility in the alternative mechanism. 
Given that the low type has no incentive to mimic the high type in the VCG-style mechanism, its incentive compatibility constraint to mimic the high type has an even larger slack in the alternative mechanism.
In other words, in the alternative mechanism, we can further decrease the deterministic signal recommended for the high type without violating any incentive compatibility constraint.
Indeed, in the best equilibrium of the WTA contest, the high type produces a signal $1$, which is strictly lower than $\bar{s}$, the certainty equivalent signal for the high type in the VCG-style mechanism. 

\end{example}

This observation does not hinge on the functional form of the cost function being quadratic or the allocation being efficient.
The key requirement is ICE, which includes quadratic costs as a special case. 
Under the ICE condition, the cost of deterring deviating incentives is the dominant effect, making
the uncoordinated WTA contest utility-improving for all agents, as illustrated in the example. 
Our main theorem (\cref{thm:monotone optimal general}) generalizes the observation by showing that among implementations of any monotone interim allocation, coordination in signal recommendations is dominated by non-coordination mechanisms.

Although the cost function in the above example follows \citet{perez2024score}, our central observation about the value of zero coordination is new. \citet{perez2024score} studies a single-agent model, which rules out coordination altogether.
Our result is also novel relative to the literature on money burning and costly ordeals, which typically impose restrictive quasi-linear assumptions on agents' preferences. 
Under these restrictions, coordinating and non-coordinating mechanisms are welfare equivalent. We consider a broader class of agents' preferences, which enables us to identify the different welfare consequences of coordinating and non-coordinating mechanisms even when they implement the same allocation outcome. Identifying this coordination tradeoff not only opens new design questions but also highlights the welfare implications of mechanism implementation. Relatedly, \citet{zhang2023optimal} shows that all-pay mechanisms are optimal under a different objective of maximizing effort and convex cost functions. 
In contrast, our design objective is welfare maximization, and we study a larger class of preferences: among the preferences with separable cost functions, we can accommodate non-convex cost functions and, more generally, type dependent costs.

The remainder of the paper is organized as follows. In the next subsection, we relate our work to the existing literature. In \cref{sec:model}, we  introduce the general model. In \cref{sec:non_coordination_is_optimal}, we present the main result and analysis. In \cref{sub:implementation}, we propose coarse ranking contests as practical implementations of non-coordination mechanisms in symmetric environments. These mechanisms extend traditional contests and improve robustness. In \cref{sec:discussions}, we discuss robustness concerning modeling assumptions. 
In \cref{sec:conclusion}, we conclude with directions for future research.

\subsection{Related Work}\label{sec:review}

Our paper is closely related to the literature on money burning and costly ordeals. Prior work in this area typically assumes that agents' preferences are quasi-linear in transfers \citep[][]{hartline2008,condorelli2012money,finkelstein2019take} or in effort-based costs independent of private information \citep[][]{chakravarty2013optimal, yang2024comparison}. 
Under such restrictions, the mechanism design space simplifies: for any mechanism that coordinates agents' signaling choices, there exists a non-coordinating counterpart that yields identical utilities for each agent. This invariance suggests that coordination offers no inherent welfare advantage in standard models.
This logic traces back to \citet{mcafee1992bidding}, who studied bidding rings in first-price auctions. In their setting, if a ``weak cartel'' cannot use side-payments, the optimal collusive strategy is simply to have all members submit identical bids at the reserve price, effectively a non-coordinated outcome. When signaling costs are identical and type-independent, our environment is conceptually similar to theirs; the bids paid to a third-party seller act as the wasteful signals that serve as screening devices in our model. A key distinction is that while \citet{mcafee1992bidding} focus on collusion between agents, the coordination in our setting is performed by the principal, who integrates reports from all agents to determine signaling requirements.

We depart from this literature by relaxing the quasi-linearity assumption.
In our more general setting, mechanisms implementing the same allocation rule can result in disparate welfare outcomes. While this expansion of the feasible mechanism space introduces new technical challenges, it allows us to uncover a novel economic tradeoff regarding the optimality of coordination. 
We introduce the Increasing Certainty Equivalence (ICE) assumption (\cref{asp:monotone_certainty_equivalence}), under which we prove that welfare-maximizing mechanisms feature no coordination.
We thus provide a robust generalization of the optimality of non-coordination mechanisms to a broader class of preferences.

Our paper is also related to mechanism design beyond quasi-linearity, where transfers generate wealth effects through risk aversion or financial constraints. In such environments, payoff-equivalence arguments can fail and the details of implementation can matter for welfare and incentives \citep[e.g.,][]{maskin1984optimal,matthews1987comparing,che1998standard}. More broadly, recent work studies incentive compatibility and mechanism design without quasi-linearity under general nonlinear preferences \citep[e.g.,][]{kosMessner2013nonquasi,kazumuraMishraSerizawa2020nonquasi}. Our setting differs in that monetary transfers are replaced by costly (possibly type-dependent) signals, and the principal seeks to maximize agents' utilities. Nonetheless, the shared theme is that nonlinearity enlarges the relevant design space; our ICE condition restores tractability by implying that non-coordination mechanisms weakly dominate coordinating ones for implementing any monotone allocation.

Our paper is also conceptually related to the literature on falsification and costly lying. \citet{green1986partially} study auction settings where lying costs are either zero or infinite, while later papers consider richer cost structures but focus on different objectives, such as classification \citep{hardt2016strategic} or test design \citep{perez2022test}. 
The most closely related work is \citet{perez2024score}. They show that under linear or quadratic falsification costs, the optimal mechanism is score-based; that is, it recommends a deterministic action based on the report. The main difference is that they study a single-agent model, ruling out the possibility of coordination. In contrast, we study a multi-agent environment. This allows us to study the welfare consequences of coordinating agents' signaling choices.

The costly signaling aspect of our model is rooted in the classic signaling literature \citep[e.g.,][]{spence1973job} and the more recent work on strategic gaming of signals \citep[e.g.,][]{FK2019muddled,Ball2019}. The main difference is that, unlike standard signaling games where markets reward agents based on beliefs about their type, we study institutional settings where a principal directly controls allocation and can design mechanisms under resource constraints. This design perspective enables us to analyze how different implementations of the same allocation rule lead to different agent behaviors and social costs.

Our non-coordination mechanisms can also be implemented as coarse ranking contests, which generalize traditional contest formats such as all-pay auctions \citep{baye1996all}, tournaments \citep{lazearRosen1981rank}, and Tullock contests \citep{skaperdas1996contest}. Coarse rankings allow the mechanism to map multiple performances into the same tier, enabling robustness and reducing unnecessary effort. \citet{zhang2023optimal} also shows that contests can be optimal for inducing effort, under special signaling costs. Our key departure from this literature is that we focus on minimizing socially wasteful effort.

\section{Model}
\label{sec:model}
The principal (she) wishes to allocate $k$ identical items to~$n> k$ heterogeneous agents (he).
An allocation $\signalallocs=(\signalalloc_i)_{i=1}^n$ is a vector of probabilities such that $0\leq\signalalloc_i\leq 1$ for each $i$, and $\sum_{i=1}^n\signalalloc_i\leq k$.\footnote{Our paper focuses on the interpretation where items are indivisible and $x_i$ represents allocation probabilities. All the results extend naturally by considering divisible items and interpreting $x_i$ as fractional allocations.}
Let $\allocspace \subseteq [0,1]^n$ be the space of feasible allocations.

Each agent $i$ has a private type~$\type_i$ drawn independently from a publicly known distribution~$\dist_i$ supported on $\Theta_i=[\underline{\type}_i,\bar{\type}_i]\subseteq \reals_+$.
The distribution of the type profile $\types$ is~$\dists$, whose support is $\Typespace = \prod_{i=1}^n\typespace_i$.
Each agent $i$ can generate a costly public signal $\signal_i\in \signalspace=\reals_+$ 
and the utility of the agent is $\expostutil_i(\signalalloc_i,\signal_i,\type_i)$ when receiving an item 
with probability $\signalalloc_i\in[0,1]$ and generating a signal $\signal_i$.
We assume that $\expostutil_i(\signalalloc_i,0,\type_i) \in [0,1]$ for any $\signalalloc_i,\type_i$. 
That is, all agents have a weakly positive value for any allocation given their types when not generating any costly signal, and this value is bounded.  
Denote the Cartesian product of agents' signal space by $\Signalspace = \signalspace^n$.
The principal cannot observe agents' private types, but she can observe their signals.

\begin{assumption}[regularity]\label{asp:monotonicity}
For any agent $i$, his utility function $\expostutil_i$ is continuous in all of its coordinates, bounded in value, strictly increasing in $x_i$, weakly decreasing in $s_i$, and weakly increasing in $\type_i$. Moreover, it satisfies the von Neumann-Morgenstern expected utility representation.
\end{assumption}

Under \cref{asp:monotonicity}, each agent always strictly prefers a higher probability of allocation. Moreover, since it is costly to generate any signals, the cost of signaling is a  weakly increasing function of signal production. 
There are two potential reasons why the utility is an increasing function of type. 
First, a higher type may value the item more. 
Second, a higher type may have a weakly lower cost of generating the same signal. 
Our assumption is general enough to cover both scenarios.

\paragraph{Mechanisms.}
We want to define a class of mechanisms as general as possible.
For example, we want to incorporate the possibility of communication between the principal and agents, and the choice of the mechanisms can be contingent on such communication.
Following \citet{myerson1982optimal}, we know that it is both convenient and without loss of generality to study direct mechanisms.\footnote{In \citet{myerson1982optimal}, such mechanisms are called direct coordination mechanisms. We call them direct mechanisms or general mechanisms, because some of them exhibit zero coordination.} 
In these direct mechanisms, we allow the principal to receive type reports from all agents, based on which she makes signal recommendations. Moreover, the allocation of items can be contingent on both the reports and the observed signals.

Formally, the principal first commits to a signal recommendation policy $\signalrecommends:\Typespace\to \Delta(\Signalspace)$ and an allocation rule $\mechallocs:\Typespace\times\Signalspace\to \allocspace$.
Then each agent $i$ reports $\reporttype_i$ to the principal and receives a signal recommendation $\signalrecommend_i(\reporttypes)$.  
Based on the recommendation, each agent $i$ produces signal $\signal'_i\in \{\signalrecommend_i(\reporttypes),0\}$.\footnote{We impose interim participation constraints for agents, i.e., agents can choose to walk away after seeing the signal recommendation. Our formulation is without loss of generality under the interim participation constraints since the principal can partially enforce the recommendation by allocating no items to any agent who chooses a signal different from the recommended one. Here each agent essentially has two choices: following the recommendation ($\signal'_i=\signalrecommend_i(\reporttypes)$) or opting out ($\signal'_i=0$). An alternative formulation is to only consider the ex ante participation constraint where each agent only has the option to opt out at the beginning of the mechanism, and they are forced to follow the principal's recommendation after participation.  This distinction is not crucial for our analysis, and the main results of the paper hold for both cases.}
In the end, the principal observes the signal profile $\signals'$.
Each agent $i$ receives an item with probability $\mechalloc_i(\reporttypes,\signals')$. We denote this direct mechanism by $(\signalrecommends,\mechallocs)$.

We highlight a special class of direct mechanisms that will be of importance in our analysis.
These are direct \emph{non-coordination} mechanisms. 
They impose the following additional restrictions on direct mechanisms: (1) the signal recommendation policy for each agent $i$ is a function of his type, that is,  $\signalrecommend_i:\Theta_i\to \signalspace_i$, and (2) the allocation rule can only depend on public signals, that is, $\mechallocs:\Signalspace\to \allocspace$.
These restrictions rule out the possibility of using reports from all agents to coordinate agents' choices of signals. 

It is worth noting that direct non-coordination mechanisms can be implemented as \emph{signal-based} mechanisms. In a signal-based mechanism, the principal commits to a signal-based allocation rule $\signalallocs: \Signalspace \rightarrow \allocspace$, which maps each realized signal profile to a (randomized) allocation. 
After that, each agent~$i$ chooses a signal strategy $\signal_i:\Theta_i\to\Delta(\signalspace)$ to maximize his expected utility. 
Given the strategy profile $\signals=(\signal_i)_{i=1}^n$ and the realized signal profile $\signals(\types)=(\signal_i(\type_i))_{i=1}^n$, an item is allocated to agent $i$ with probability~$\signalalloc_i(\signals(\types))$.
For any direct non-coordination mechanism, it is straightforward to show that there exists a signal-based mechanism that implements the same outcome under a pure Nash equilibrium.\footnote{A similar observation on this indirect implementation is also made in the contemporary work by \citet{perez2024score} in a single-agent environment. }

\paragraph{Interim approach.}
Given any direct mechanism $(\signalrecommends,\mechallocs)$, the \emph{interim allocation} and the \emph{interim utility} of agent $i$ with private type $\type_i$ is 
\begin{align}\label{interim alloc}
  \alloc_i(\type_i)=\expect[\types_{-i}]{\expect[\signal_i\sim\signalrecommend_i(\type_i,\types_{-i})]{\mechalloc_i(\type_i,\types_{-i},\signal_i,\signalrecommends_{-i}(\type_i,\types_{-i}))}}  
\end{align}
and   
\begin{align}\label{interim utility}
    \util_i(\type_i) = \expect[\types_{-i}]{\expect[\signal_i\sim\signalrecommend_i(\type_i,\types_{-i})]{\expostutil_i(\mechalloc_i(\type_i,\types_{-i},\signal_i,\signalrecommends_{-i}(\type_i,\types_{-i})),\signal_i,\type_i)}}
\end{align}
respectively.
 
A direct mechanism $(\signalrecommends,\mechallocs)$ is incentive compatible (IC) if for any agent $i$ and any types $\type_i,\type'_i$, 
his expected utility from truthfully reporting his type and following the signal recommendation is not lower than that from misreporting as $\type'_i$, that is,
\begin{align*}
    &\util_i(\type_i) \geq \expect[\types_{-i}]{\expect[s_i\sim \generalsignal_i(\type'_i, \types_{-i})]{
    \max\{0, \expostutil_i(\mechalloc_i(\type'_i,\types_{-i},\signal_i,\signalrecommends_{-i}(\type'_i,\types_{-i})),\signal_i,\type_i)\}}}.\tag{IC}
\end{align*}
We denote the interim allocation profile and interim utility profile as $\allocs=(\alloc_i)_{i=1}^n$ and $\utils=(\util_i)_{i=1}^n$, respectively.

\begin{definition}[implementability]\label{def:implementability}
An interim allocation--utility profile $(\allocs,\utils)$ can be \emph{implemented by a direct mechanism}
if there exists a signal recommendation policy~$\generalsignals:\Typespace\to \Delta(\Signalspace)$ and an
allocation rule~$\mechallocs:\Typespace\times\Signalspace\to \allocspace$ such that (i) the direct mechanism $(\signalrecommends,\mechallocs)$ is incentive compatible, and (ii) the consistency conditions \eqref{interim alloc} and \eqref{interim utility} hold for any agent $i$ and type $\type_i$. 

An interim allocation $\allocs$ can be \emph{implemented} (by a direct mechanism) if there exists an interim utility profile $\utils$ such that $(\allocs,\utils)$ can be implemented by a direct mechanism.
\end{definition}

Analogously, an interim allocation--utility profile $(\allocs,\utils)$ can be \emph{implemented by a direct non-coordination mechanism} if the direct mechanism $(\signalrecommends,\mechallocs)$ in \cref{def:implementability} is a direct non-coordination mechanism. That is, the signal recommendation policy for each agent $i$ only depends on his own type: $\generalsignal_i:\Theta_i \rightarrow \signalspace$, and the allocation rule only depends on the public signals: $\mechallocs: \Signalspace \rightarrow \allocspace$.   The consistency condition now becomes
\begin{align*}
    \tag{C'}
    \alloc_i(\type_i)&=\expect[\types_{-i}]{\mechalloc_i(\signal_i,\signalrecommends_{-i}(\types_{-i}))},\\
    \util_i(\type_i) &=
\expect[\types_{-i}]{\expostutil_i(\mechalloc_i(\signal_i,\signalrecommends_{-i}(\types_{-i})),\signal_i,\type_i)}.
\end{align*}

Notice that in a direct mechanism, both the allocation and the recommended signal could be stochastic from each agent's perspective, whereas in a direct non-coordination mechanism, only the allocation could be stochastic from each agent's perspective. We introduce two more simplifying notations. For any agent $i$,  denote the expected utility of type $\type$ by $\util_i(\type;G):= \expect[(x_i,s_i)\sim G]{\expostutil_i(x_i,\signal_i,\type_i)}$ when the allocations and recommended signals follow the distribution $G$. Denote the expected utility of type $\type$ by $\util_i(\type;G_{\allocspace},\signal)$ when the allocations follow the distribution $G_{\allocspace}$ and the signal is $\signal\in \posreals$.

\paragraph{Principal's payoff.} 
The principal is a social planner who cares about maximizing the utilities of agents when implementing any targeted allocation rule. The principal relies on public signals to screen agents, but is aware that excessive signaling is socially wasteful.
 Formally, given any allocation rule $\allocs$ that can be implemented by a direct mechanism, the main objective is to design a mechanism that implements $\allocs$ and maximizes the weighted utilities of the agents, that is,  
\begin{align*}
\sum_{i=1}^n \expect[\type_i\sim\dist_i]{w_i(\type_i)\cdot \util_i(\type_i)},
\end{align*}
where $w_i(\type_i)\in[0,1]$ is a weighting function that can depend on each agent's private type.

\section{Optimality of Non-coordination Mechanisms}
\label{sec:non_coordination_is_optimal}
In this section, we show that under some conditions, given any monotone allocation rule, the optimal mechanism is a non-coordination mechanism.
\subsection{Assumptions}\label{sec:assumptions}
We first introduce the standard weak single-crossing property as in \citet{milgrom1994monotone}.
\begin{assumption}[weak single-crossing property (wSCP)]\label{asp:weak_single_crossing}
For any agent $i$, his utility function $\expostutil_i$ satisfies \emph{weak single-crossing property (wSCP)}: for any types $\type_i'>\type_i$, any signals $\signal_i'\geq\signal_i$ and any allocation probabilities $\signalalloc_i'\geq\signalalloc_i$, 
\begin{enumerate}
    \item $\expostutil_i(\signalalloc_i',\signal_i',\type_i)-\expostutil_i(\signalalloc_i,\signal_i,\type_i) > 0$ 
implies that $\expostutil_i(\signalalloc_i',\signal_i',\type_i')-\expostutil_i(\signalalloc_i,\signal_i,\type_i') > 0$;
\item $\expostutil_i(\signalalloc_i',\signal_i',\type_i)-\expostutil_i(\signalalloc_i,\signal_i,\type_i) \geq 0$ 
implies that $\expostutil_i(\signalalloc_i',\signal_i',\type_i')-\expostutil_i(\signalalloc_i,\signal_i,\type_i') \geq 0$.
\end{enumerate}
\end{assumption}

Note that \cref{asp:weak_single_crossing} only requires comparisons across deterministic outcomes. This restriction is important in our setting because, in a general direct mechanism, an agent may face stochastic signal recommendations, especially when the principal coordinates agents' signaling choices using cross-agent information. 
As a result, \cref{asp:weak_single_crossing} alone is typically not sufficient to ensure that local incentive constraints characterize global incentive compatibility in arbitrary direct mechanisms.\footnote{\citet{kartik2024single} study single-crossing differences (SCD), which impose single-crossing comparisons for arbitrary lotteries. This requirement is stronger than \cref{asp:weak_single_crossing} and therefore applies to a narrower class of utility functions. For example, $u(x,\signal,\type)=x-(\signal-\type)^+$ satisfies \cref{asp:weak_single_crossing} but not SCD.}
To circumvent this technical challenge, we adopt a constructive approach: starting from any direct mechanism that implements a monotone allocation rule, we explicitly construct a non-coordination mechanism that implements the same allocation rule and weakly improves agents' utilities. This construction relies crucially on the following assumption.

\begin{definition}\label{def:certainty_equivalence}
Suppose allocations and signals follow the distribution $G$. Denote the marginal distribution over allocations by $G_{\allocspace}$. The certainty equivalent signal $\ce_i(G,\type)$ of the distribution $G$ for type $\type$ satisfies the following conditions: $\util_i(\type;G_{\allocspace},\ce_i(G,\type))=\util_i(\type;G)$.
\end{definition}

\begin{assumption}[Increasing Certainty Equivalence]
\label{asp:monotone_certainty_equivalence}
A utility function $\expostutil_i$ is said to have increasing certainty equivalence if for any distribution $G$ over allocations and signals, his certainty equivalent signal $\ce_i(G,\type)$ is increasing in $\type$. 
\end{assumption}

\Cref{asp:monotone_certainty_equivalence} requires that higher-type agents are able to produce higher certainty-equivalent signals for any given lottery over signal recommendations.
To illustrate this condition more concretely, consider a class of utility functions that are additively separable in valuation for the allocation and signaling cost
$$
\expostutil_i(\signalalloc_i,\signal_i,\type_i) = v_i(\signalalloc_i,\type_i)- c_i(\signal_i,\type_i).$$
Here, $v_i(\cdot,\cdot)$ represents the agent’s valuation  and $c_i(\cdot,\cdot)$ denotes the cost of signaling.\footnote{In settings without additive separability, we do not have simple sufficient conditions for \cref{asp:monotone_certainty_equivalence}. While we focus on additive separability for interpretability, our results apply whenever \cref{asp:monotone_certainty_equivalence} holds. In \cref{apx:non_separable}, we extend the core logic of \cref{thm:monotone optimal general} to non-separable utilities by proposing a weaker version of \cref{asp:monotone_certainty_equivalence} and providing sufficient conditions for its validity.
}

\paragraph{Type-independent and Multiplicatively Separable Costs} In this case, the utility function of the agents takes the form of $\expostutil_i(\signalalloc_i,\signal_i,\type_i) = v_i(\signalalloc_i,\type_i)- c_i(\signal_i)$. 
This includes the application of mechanism design with money burning or costly ordeals, 
where the agent's utility function typically takes the form 
$\expostutil_i(\signalalloc_i,\signal_i,\type_i) = \type_i \cdot \signalalloc_i - \signal_i$ \citep[e.g.,][]{hartline2008}.
Moreover, this class of preferences can accommodate the following cases studied in the contest literature: $c_i(\cdot)$ is convex \citep[e.g.,][]{fang2020turning,zhang2023optimal}, concave \citep[e.g.,][]{moldovanu2001optimal}, convex-then-concave or concave-then-convex \citep[e.g.,][]{kim2024choosing}, or even more complex structures. 

More generally, our framework accommodates multiplicatively separable costs of the form: 
$c_i(\signal_i,\type_i) = h_i(\type_i)\cdot g_i(s_i)$.
An agent with these preferences behaves identically to one with type-independent costs under the normalized utility 
$\frac{v_i(\signalalloc_i,\type_i)}{h_i(\type_i)} - g_i(s_i)$. 
This allows us to capture settings in the redistribution literature
where the agent's private type is a two-dimensional vector (valuation $v_i$ and marginal cost $c_i$)  
 and utility is
$
\expostutil_i = v_i\cdot \signalalloc_i - c_i \cdot \signal_i$ \citep{akbarpour2024redistributive,yang2024comparison}.

In these environments, \cref{asp:monotone_certainty_equivalence} holds as a boundary case: the certainty-equivalent signal is constant across types because the conversion of a signal distribution into its deterministic equivalent depends solely on $c_i(\signal_i)$.
In these cases, payoff equivalence holds \citep{myerson1981, zhang2023optimal}: Any two mechanisms implementing the same interim allocation induce the same utilities for the agents.
Thus, the optimality of non-coordination follows trivially.

\paragraph{Type-dependent Costs}
Our framework also accommodates genuinely type-dependent cost functions where payoff equivalence fails.
While the classic signaling literature typically relies on the single-crossing property \citep{spence1973job}, \cref{asp:monotone_certainty_equivalence} imposes additional structure on the curvature of the cost function.

Technically, \cref{asp:monotone_certainty_equivalence} is equivalent to assuming that the disutility of signaling, $-c_i(\signal_i,\type_i)$, exhibits weakly decreasing absolute risk aversion (DARA) with respect to the signal \citep{pratt1976risk, maskin1984optimal}. If $\cost_i(\cdot,\cdot)$ is third-differentiable, \cref{asp:monotone_certainty_equivalence} is satisfied if and only if \footnote{In \citet{pratt1976risk,maskin1984optimal}, one often additionally assumes risk aversion, i.e., $-c_{ss}\leq 0$. We do not require this restriction for \cref{asp:monotone_certainty_equivalence}.}
\begin{align}\label{eq:risk_constant}
\frac{\partial}{\partial \type_i} \rbr{-\frac{c_{ss}(\signal_i,\type_i)}{c_s(\signal_i,\type_i)}} \leq 0, \,\,\forall \signal_i,\type_i,
\,\, \text{where } c_{ss}(\signal_i,\type_i) = \frac{\partial^2 c_i(\signal_i,\type_i)}{\partial \signal_i^2},
c_{s}(\signal_i,\type_i) = \frac{\partial c_i(\signal_i,\type_i)}{\partial \signal_i}.
\end{align}
\citet{maskin1984optimal} provides further simplifying assumptions on primitives (in their assumptions A and B) such that the above condition holds.

While this connection is technically interesting, we highlight that  ICE should be interpreted as a curvature restriction on signaling costs rather than a restriction on risk preferences over outcomes. Notice that we usually define ``risk'' over outcomes: $v_i-c_i$ if agent $i$ gets an item and $-c_i$ if he does not get an item. 
Since preferences are linear in the allocation, agents remain risk-neutral in the traditional sense; ICE simply implies that higher types are more capable of generating higher deterministic equivalent signals to resolve the randomness in recommendations.

Type-dependent cost functions satisfying \cref{asp:monotone_certainty_equivalence} arise naturally in several distinct economic contexts. 
In the costly falsification literature, $c_i(\signal_i,\type_i)$ represents the effort required to inflate a report or score away from the truth.
A commonly used reduced-form specification assumes that only upward falsification is costly, taking the form $
c_i(\signal_i,\type_i)= C\!\big((\signal_i-\type_i)^+\big),
$
for some increasing function $C(\cdot)$ with $C(0)=0$ \citep{perez2023fraud}.
This formulation captures the intuition that the agent pays only for the incremental ``gap'' between the submitted signal and the true type. It nests several parametric forms used in applications, such as the linear cost
$c_i(\signal_i,\type_i)=(\signal_i-\type_i)^+$ and the quadratic cost $c_i(\signal_i,\type_i)=(\signal_i-\type_i)^2$ are studied in \citet{perez2024score}.\footnote{While the quadratic cost is not globally monotone in $\signal_i$, \citet{perez2024score} demonstrate that this is inconsequential for implementing monotone allocations, as equilibrium signal recommendations necessarily exceed the true type.}

Similar distance-based costs appear in contest models with ``head starts'',  where a baseline advantage shifts the feasible frontier of attainable scores. These head starts may reflect innate talent, signaling greater merit, or socioeconomic advantages, such as access to elite tutoring.  
This structure is explored in \citet{siegel2009all},where each player $i$ chooses a score $s_i \in [a_i,\infty)$, and $a_i \ge 0$ is an initial score capturing the head start. By interpreting $e_i:=s_i-a_i\ge 0$ as incremental effort, the cost representation $c_i(s_i,a_i)=C((s_i-a_i)^+)$, which aligns directly with our framework.

Finally, \cref{asp:monotone_certainty_equivalence} can be interpreted as the cost of payments under financial constraints. In this context, it is useful to distinguish the baseline payment from the premium incurred when expenditures exceed available liquid funds.
\citet{che1998standard} model the cost of paying $x$ given budget $w$ via a cost-of-spending function $C(x,w)$, and study the cost function $C(x,w)=x+R((x-w)^+)$
where $R(\cdot)$ represents an increasing, convex borrowing or financing premium satisfying $R(0)=0$.
Mapping the payment $x$ to our signal $\signal_i$ and the budget $w$ to our type $\type_i$, we get a cost function
\[
c_i(\signal_i,\type_i)=\signal_i + R\big((\signal_i-\type_i)^+\big).
\]
Applying \citet{pratt1976risk}, this specification satisfies \cref{asp:monotone_certainty_equivalence} if $\log(1+R'(z))$ is concave, then the above cost function satisfies \cref{asp:monotone_certainty_equivalence}, a condition met by both linear  $R(z)=r z$ and the quadratic financial premia $R(z)=\kappa z^2$.

\subsection{Main Theorem}\label{sec:main theorem}

Now we are ready to state our main theorem.
\begin{theorem}[Optimality of Non-coordination]
\label{thm:monotone optimal general}
Suppose \Cref{asp:monotonicity}, \ref{asp:weak_single_crossing} and \ref{asp:monotone_certainty_equivalence} hold.
Fix any interim allocation--utility pair $(\allocs,\utils)$ such that $(\allocs,\utils)$ can be implemented by a direct mechanism and $\alloc_i$ is weakly increasing for all $i$. 
Then there exists an interim allocation--utility pair $(\allocs\primed,\utils\primed)$ that can be implemented by a direct non-coordination mechanism
such that $\allocs\primed=\allocs$, and 
$
\util\primed_i(\type_i) \geq \util_i(\type_i)$ for all $ i, \type_i$. 
\end{theorem}

We highlight that \cref{thm:monotone optimal general} can be strengthened to a strict version.
Consider the following \emph{strict} version of ICE: for any \emph{non-degenerate} distribution $G$ over allocations and signals, the certainty equivalent signal $\ce_i(G,\type)$ is strictly increasing in $\type$.\footnote{The strict ICE holds, for example, when the condition in \cref{eq:risk_constant} holds with strict inequalities. } 
Under this strict ICE condition, any coordinating mechanism that involves stochastic signal recommendations for some types can be \emph{strictly} improved by a non-coordination mechanism.

\cref{thm:monotone optimal general} delivers a sharp simplification: under ICE, for any implementable \emph{monotone} interim allocation, coordination in signal recommendations is never welfare-improving. 
Under the strict ICE condition, coordination is strictly sub-optimal.
The optimality of non-coordination mechanisms suggests that to maximize agents’ utilities, it is best not to reveal any information to them.
Any attempt to coordinate agents’ actions necessarily involves the use of cross-agent information: effectively ``leaking” others' information to each agent.
This typically involves stochastic signal recommendations, whose realizations depend on all agents’ private information.
Under the ICE condition, such stochastic recommendations can be improved upon using deterministic signals, where each agent receives information solely based on their private type and the prior, without learning anything about other agents or the state of the environment.

A non-coordination mechanism admits an \textbf{all-pay} format. From this perspective, \cref{thm:monotone optimal general} can be interpreted as follows: when using costly signals as screening devices in allocation problems, the welfare-maximizing implementation admits an all-pay format. 
At first glance, this may seem counterintuitive since all-pay formats require both winners and losers to incur signaling costs. 
However, as shown in our introductory example, in the WTA contest, each agent is required to produce a ``lower signal'' while maintaining incentive compatibility. 
This allows every type to retain more information rent, improving utilities across the board. More generally, our proof intuition below will show that all-pay formats can deter lower types from mimicking higher types more effectively, and equivalently, they have a lower ``incentive cost.''

\paragraph{Proof Intuition.} The proof is constructive and is provided in \cref{apx:main thm}. Here, we only outline the main intuition.
Fix an arbitrary direct mechanism $\mathcal{M}$ that induces the interim allocation and interim utilities $(\allocs,\utils)$.
To illustrate the idea, consider an arbitrary agent $i$ and an arbitrary type~$\hat{\type}_i$.
Suppose that, under this mechanism, the allocation and recommended signal for type~$\hat{\type}_i$ follow a distribution $G$.
Note that in any coordination mechanism, the signal recommendation is inherently stochastic, since coordination arises precisely when recommendations depend on other agents' types.
Thus, we focus on the case where the distribution $G$ is non-degenerate. Moreover, to simplify the exposition, we focus on a finite type space and let $\tilde{\type}_i$ be the largest type below $\hat{\type}_i$. 

Next, we construct a candidate mechanism $\mathcal{M}\primed$ in which type $\hat{\type}_i$ is recommended with a deterministic signal $\hat{s}_i$ and receives the item following the marginal distribution $G_{\allocspace}$ induced by $G$, leaving the rest of the mechanism unchanged from the original mechanism. 
Let $\utils\primed$ be the interim utilities of the agents in mechanism $\mathcal{M}\primed$.
We choose $\hat{s}_i$ so that the upward-deviation constraint of the lower type $\tilde{\type}_i$ binds under the candidate mechanism, i.e., $\util\primed_i(\tilde{\type}_i)=\util\primed_i(\tilde{\type}_i;G_{\allocspace},\hat{s}_i)$. Because the required signal is independent of the reports of other agents, the signaling choice of type $\hat{\type}_i$ is uncoordinated. We want to show (1) \textbf{utility improvement}: the candidate mechanism improves interim utilities by removing coordination, and (2) \textbf{incentive compatibility}: the construction is valid in the sense that the candidate mechanism is incentive-compatible.

\textbf{Utility improvement.} To show (1), we only need to show that the interim utility of type $\hat{\type}_i$ increases, since the rest of the mechanism is the same as the original one. Thus, it suffices to show that the required signal $\hat{s}_i$ is weakly lower than type $\hat{\type}_i$'s certainty-equivalent signal under the original mechanism, $\ce_i(G,\hat{\type}_i)$.

To see why this holds, we introduce an intermediary mechanism $\widehat{\mathcal{M}}$ in which type $\hat{\type}_i$ is recommended with a deterministic signal $\ce_i(G,\hat{\type}_i)$ and receives the item following the marginal distribution $G_{\allocspace}$ induced by $G$, leaving the rest of the mechanism the same as the original mechanism $\mathcal{M}$. By construction, the expected utility of every type from truthful reporting is the same in the original mechanism $\mathcal{M}$ and in the intermediary mechanism $\widehat{\mathcal{M}}$.\footnote{We highlight that the main purpose of introducing the intermediary mechanism $\widehat{\mathcal{M}}$ is to show (1). The incentive compatibility issue of the intermediary mechanism $\widehat{\mathcal{M}}$ does not affect our analysis, and so we will skip its discussion.} Hence, we also denote the interim utility in $\widehat{\mathcal{M}}$ by $\util_i(\type_i)$.
Moreover, it is worth noting that the intermediary mechanism $\widehat{\mathcal{M}}$ is constructed similarly to our candidate mechanism $\mathcal{M}\primed$, except that it recommends a different deterministic signal for type $\hat{\type}_i$. 

In addition, we introduce the deviation utilities in all three mechanisms that will be helpful for the analysis: (a) each type $\type_i$'s expected utility from mimicking $\hat{\type}_i$ in the original mechanism: $\util_i(\type_i;G)=\util_i(\type_i;G_{\allocspace},\ce_i(G,\type_i))$, (b) each type $\type_i$'s expected utility under the same marginal allocation $G_{\allocspace}$ but with a deterministic signal $\hat{\signal}_i$: $\util_i\primed(\type_i;G_{\allocspace},\hat{\signal}_i)$, and (c) each type $\type_i$'s expected utility under the same marginal allocation $G_{\allocspace}$ but with a deterministic signal $\ce_i(G,\hat{\type}_i)$: $\util_i(\type_i;G_{\allocspace},\ce_i(G,\hat{\type}_i))$. 
Notice that (a), (b), and (c) capture each type's incentive to deviate to type $\hat{\type}_i$ in the original mechanism $\mathcal{M}$, in the candidate mechanism $\mathcal{M}\primed$, and in the intermediary mechanism $\widehat{\mathcal{M}}$, respectively.
These three deviation utilities correspond to the red, green, and blue curves in \Cref{fig:intuition_coordination}, respectively. In addition, the black curve represents $\util_i(\type_i)$, the interim utility in $\mathcal{M}$ and $\widehat{\mathcal{M}}$. 

\begin{figure}[t]
\centering
\begin{tikzpicture}[xscale=5,yscale=3.5, dot/.style={draw,circle,fill,minimum size=0.6mm,inner sep=0pt},
    pins/.style={#1, pin edge={<-, #1, decorate, decoration={name=lineto,
    pre=moveto, pre length=2pt}}},
    Dotted/.style={
    dash pattern=on 0.1\pgflinewidth off #1\pgflinewidth,line cap=round,
    shorten >=#1\pgflinewidth/2,shorten <=#1\pgflinewidth/2},
    Dotted/.default=4]

\draw [<->] (0,1.1) -- (0,0) -- (1.1,0);
\node [below] at (1.15, 0 ) {$\type_i$};
\node [left] at (0,1.1 ) {$\util_i$};
\draw [blue, thick] plot [smooth, tension=0.6] coordinates { (0,0) (0.3, 0.16) (0.5,0.5) };
\draw [blue, thick] plot [smooth, tension=0.6] coordinates { (0.5,0.5) (0.7,0.7) (1, 0.8)};
\node [right, black] at (1, 0.8 ) {$\util_i(\theta_i;\textcolor{blue}{G_{\allocspace},\ce_i(G,\hat{\type}_i)})$};
\node [right, black] at (1, 0.95 ) {$\util_i(\theta_i;\textcolor{red}{G})$};

\draw [red, thick] plot [smooth, tension=0.6] coordinates { (0,0.1) (0.3, 0.33) (0.5,0.5) (0.7,0.7) (0.8, 0.85) (1,0.95)};
\node [fill, draw, circle, minimum width=3pt, inner sep=0pt, 
      pin={[fill=white, outer sep=1pt]315:same util}] 
      at (.5,0.5) {};

\draw [black, thick] plot [smooth, tension=0.6] coordinates { (0,0.33) (0.3, 0.4) (0.5,0.5) (0.7,0.8) (1, 1.1)};
\node [right, black] at (1, 1.1) {$\util_i(\theta_i)$};

\draw [green!60!black, thick] plot [smooth, tension=0.6] coordinates { (0,0.2) (0.3, 0.4) (0.5,0.7) (0.7,1) (1, 1.2)};
\node [right, black] at (1, 1.28 ) {$\util_i\primed(\theta_i;\textcolor{green!60!black}{G_{\allocspace},\hat{\signal}_i})$};

\node [] at (0.5, -0.1) {$\hat{\type}_i$};
\draw [dotted] (0.5,0) -- (0.5,0.5);
\node (A) at (0,0) {};
\node (B) at (0.5,0) {};

\node [] at (0.3, -0.1) {$\tilde{\type}_i$};
\draw [dotted] (0.3,0) -- (0.3,0.4);

\end{tikzpicture}
\caption{Comparison of Incentive Cost: Coordination vs Non-coordination}
\rule{0in}{1.2em}$^\dag$\scriptsize 
The black curve is the interim utility of the agents in the original mechanism $\mathcal{M}$. 
The \textcolor{red}{red curve} captures the deviation utility for each type by mimicking  type $\hat{\type}_i$ in the original coordination mechanism $\mathcal{M}$. The \textcolor{blue}{blue curve} captures the deviation utility for each type by mimicking type $\hat{\type}_i$ in the intermediary mechanism $\widehat{\mathcal{M}}$. The \textcolor{green!60!black}{green curve} captures the deviation utility for each type by mimicking type $\hat{\type}_i$ in the non-coordinated mechanism $\mathcal{M}\primed$.\\
\label{fig:intuition_coordination}
\end{figure}

First, let us compare the original mechanism $\mathcal{M}$ with the intermediary mechanism $\widehat{\mathcal{M}}$. We argue that by replacing the stochastic signal recommendation with a deterministic signal recommendation for type $\hat{\type}_i$, we relax the upward deviation incentive of the lower type~$\tilde{\type}_i$. 
This is implied by combining the observations that (i) the utility of type $\tilde{\type}_i$ remains unchanged when reporting truthfully, as all three mechanisms make the same recommendations for type $\tilde{\type}_i$, and (ii) the utility of type $\tilde{\type}_i$ decreases when misreporting as type $\hat{\type}_i$ in the intermediary mechanism $\widehat{\mathcal{M}}$. 
To finish the argument for the latter claim, let us illustrate the relation between the deviation utilities (a) and (c), or the red and blue curves.
By construction, type $\hat{\type}_i$ has the same expected utility from truthful reporting in the original and intermediary mechanism, so the red and blue curves intersect at type $\hat{\type}_i$.
Moreover, the ICE condition implies that $\ce_i(G,\type_i)\le \ce_i(G,\hat{\type}_i)$ for all $\type_i\le \hat{\type}_i$. Hence, the red curve lies above the blue curve for all types below $\hat{\type}_i$.\footnote{Careful readers may wonder why there are multiple crossings between the blue and red curves. Recall that our single-crossing condition only requires comparisons across deterministic outcomes (\cref{asp:weak_single_crossing}), whereas a coordination mechanism could offer stochastic recommendations. Thus, the red and blue curves may exhibit multiple crossings \citep{quah2012aggregating}.} 
Given that $\tilde{\type}_i<\hat{\type}_i$, the ordering of the red and blue curves implies that type $\tilde{\type}_i$'s utility from misreporting as $\hat{\type}_i$ is lower in $\widehat{\mathcal{M}}$ than in $\mathcal{M}$, establishing observation (ii). 
Combining (i) and (ii), we conclude that the upward-deviation incentive of type $\tilde{\type}_i$ is relaxed in $\widehat{\mathcal{M}}$ relative to $\mathcal{M}$. 
Intuitively, this slackness allows us to further reduce the deterministic signal recommended to type $\hat{\type}_i$ in the candidate mechanism.

Next, let us make this intuition concrete by comparing the candidate mechanism $\mathcal{M}\primed$ with the intermediary mechanism $\widehat{\mathcal{M}}$.
Recall that the candidate mechanism $\mathcal{M}\primed$ is constructed such that truthful reporting yields type $\tilde{\type}_i$ the same utility as misreporting as type $ \hat{\type}_i$. This means the green and black curves intersect at type $\tilde{\type}_i$.
Hence, incentive compatibility of $\tilde{\type}_i$ in the original mechanism implies that at type $\tilde{\type}_i$, the green (black) curve is above the red curve, thus also above the blue curve.
Given that each agent's utility decreases in the signal (\cref{asp:monotonicity}), the green curve above the blue at type $\tilde{\type}_i$ implies that $\hat{s}_i\le \ce_i(G,\hat{\type}_i)$. 
Thus, the green curve is also above the red one at type $\hat{\type}_i$, implying a utility improvement for $\hat{\type}_i$. 

Note that when strict ICE holds, $\ce_i(G,\type_i)< \ce_i(G,\hat{\type}_i)$ for all $\type_i< \hat{\type}_i$ if the signal recommendation in $G$ is non-degenerate. This implies that the red curve is strictly above the blue curve at type $\tilde{\type}_i$. 
Intuitively, this means that by replacing a stochastic recommendation with a deterministic one, the upward deviation incentive for the lower type  $\tilde{\type}_i$ becomes strictly lower in the deterministic mechanism.
This strict slackness in type  $\tilde{\type}_i$'s upward deviation incentive enables us to further reduce the deterministic signal recommended to type $\hat{\type}_i$.
In \cref{fig:intuition_coordination}, this means that the green curve is also strictly above the blue curve at type $\tilde{\type}_i$, implying that the deterministic signal recommended to type $\hat{\type}_i$ is strictly lower in the candidate mechanism than in the intermediary mechanism: $\hat{s}_i< \ce_i(G,\hat{\type}_i)$. To summarize, under the strict ICE, the utility improvement from replacing a stochastic recommendation with a deterministic one is also strict. In other words, coordination mechanisms are strictly sub-optimal under the strict ICE.

\textbf{Incentive compatibility.}
Now we argue that (2) holds. Notice that we construct our candidate mechanism so that the local upward-deviation constraint for type $\tilde{\type}_i$ is binding.
Then, given a monotone allocation rule, the single-crossing condition implies that the local downward-deviation constraint for type $\hat{\type}_i$ also holds.\footnote{Moreover, by construction, both the local upward-deviation constraint and the local downward-deviation constraint for comparisons among other types remain valid in the candidate mechanism.} 
Recall that our single-crossing condition only requires comparisons across deterministic outcomes (\cref{asp:weak_single_crossing}); we cannot apply it here to rule out global deviation because our current candidate mechanism may still involve stochastic recommendations for types above $\hat{\type}_i$.
In the formal proof, we circumvent this by modifying the candidate mechanism as follows: construct our candidate mechanism inductively so that the signal recommendations are deterministic for all types, and all local upward-deviation constraints are binding.
As a result, the single-crossing condition ensures that all the local downward-deviation constraints also hold.
Moreover, the single-crossing property also ensures that no global deviation is profitable.
\bigskip

In our construction, upward-deviation incentives are the primary concern, in contrast to classical revenue-maximizing auction design where downward-deviation incentives typically bind. This distinction stems from the differing design objectives.
In revenue maximization, the first-best benchmark (absent incentive constraints) involves charging each agent their full valuation. Private information then introduces downward-deviation incentives, as higher types mimic lower types to retain informational surplus. Conversely, because our objective is to maximize agent utilities, the first-best benchmark for any monotone interim allocation rule would set the signaling cost to zero. Consequently, the economic tension reverses: private information creates upward-deviation incentives, as now lower types want to mimic higher types to secure a larger allocation without the deterrent of high costs. This comparison illustrates why upward-deviation constraints bind in utility-maximization environments, while downward-deviation constraints bind in revenue-maximization settings.

\paragraph{Payoff non-Equivalence.}
Theorem~\ref{thm:monotone optimal general} operates in an environment where payoff equivalence fails: fixing an implementable (interim) allocation rule does not pin down agents' interim utilities, even after normalizing the lowest type's utility. The main driver for this non-equivalence is the type-dependent signaling cost. As we have seen in our introductory example, different implementations of the same allocation rule can induce different distributions of signal requirements across types. In environments with quasi-linear payments or type-independent signaling costs, different distributions of signal requirements have the same certainty equivalence. However, under type-dependent signaling costs, different distributions of signal requirements result in different certainty equivalences, which eventually translates directly into different information rents. 

It is also useful to contrast this source of non-equivalence with the one in signaling games. In signaling games, there is usually one sender (the agent) and one receiver (the principal).
There, the same allocation outcome can result in different signaling costs because the principal could have different off-path beliefs. 
Refinements such as D1 address precisely this belief-driven indeterminacy by selecting the Riley outcome (minimal-cost separating equilibrium). 
In our model, by contrast, payoff non-equivalence is not driven by belief selection, but reflects real differences in the signaling costs required to implement a given allocation rule. 
To see this, since there is only one agent in the Spencian signaling game, there is no scope for coordination by assumption: the agent chooses their own signal based solely on their own private type\footnote{An alternative interpretation is that there is a continuum of agents. Regardless of the interpretation, each agent chooses their own signal based solely on their own private type in the Spencian signaling game.}. 
In contrast, in our setting, there is a non-trivial tradeoff in how the designer uses cross-agent information to coordinate signal requirements.

\paragraph{Strict Optimality of Non-coordination.}

Let us take a closer look at when coordination mechanisms can be strictly outperformed by non-coordination mechanisms. When the costs are type-independent or multiplicatively separable, the certainty equivalence signal is constant over type.   In such environments, payoff equivalence holds: once interim allocations are fixed, the details of implementation do not affect agents' utilities. As a result, there are typically many welfare-optimal mechanisms, potentially including coordination. But pay-off equivalence ensures that it is without loss of optimality to focus on non-coordination mechanisms.
However, when the costs are type-dependent, implementation details matter for welfare. 
As we have discussed earlier, under the strict ICE condition, any coordinating mechanism that involves stochastic signal recommendations for some types can be \emph{strictly} improved by a non-coordination mechanism.

\paragraph{Optimal Non-coordination Mechanism.}
The optimality of non-coordination substantially reduces the complexity of the design space. By ruling out mechanisms that rely on cross-agent coordination in signal recommendations, it narrows attention to a tractable class of implementations and provides a practical path toward characterization and implementation.
For example, in a follow-up work, \citet{li2026} study resource allocation when agents can misrepresent their types and compete with one another. In their setting, each agent has utility $x_i-(\signal_i-\type_i)^+$, which satisfies the ICE assumption. As a result, by invoking \cref{thm:monotone optimal general}, they can, without loss of optimality, restrict attention to non-coordination mechanisms, which enables a characterization of the optimal design. 
\citet{li2026} further apply this characterization in large markets and show that the optimal mechanism may distort the efficient allocation, either at the middle or at the top, in order to maximize agents' utilities.

\section{Indirect Implementation: Coarse Ranking Contests}
\label{sub:implementation}
There is a tight connection between  non-coordination mechanisms and contests.
In this section, we assume that the environment is symmetric, that is, all agents have the same utility function and follow the same type distribution. 
We will show that in the symmetric environment, coarse ranking contests can implement symmetric direct non-coordination mechanisms, where the allocation rule for each agent is the same. 
Specifically, we will propose the concept of coarse ranking, in which segments of signals are pooled and assigned the same rank. Correspondingly, we extend the usual notion of a contest to the so-called coarse ranking contest, where the items are allocated to the $k$ agents with the highest coarse rankings, with ties broken uniformly at random. 

\begin{definition}[coarse ranking]
Given any countable set of disjoint open intervals $\{(\underline{\signal}^{(j)}, \bar{\signal}^{(j)})\}_{j=1}^{\infty}$ whose union is a subset of the type space, 
the \emph{coarse ranking} of agent $i$ under the signal profile $\signals=(\signal_1,\dots,\signal_n)$ is 
$$\ranking_i(\signals) = \abs{\lbr{i'\neq i, 1\leq i'\leq n: \signal_{i'} > \bar{\signal}^{(j)}}},$$ 
and the number of ties for agent $i$ is
$$\tie_i(\signals) = \abs{\lbr{i'\neq i, 1\leq i'\leq n: 
\bar{\signal}^{(j')} = \bar{\signal}^{(j)}}}+1.$$
Here, for any signal $\signal_i$, $j$ is the index such that $\signal_i \in (\underline{\signal}^{(j)}, \bar{\signal}^{(j)})$, if such a $j$ exists (i.e., if $\signal_i$ falls into one of the intervals defined). If no such $j$ exists (i.e., if $\signal_i$ lies outside all the intervals defined), then (slightly overloading the notation) we let $\bar{\signal}^{(j)} = \underline{\signal}^{(j)} = \signal_i$.
We also call the pair of functions $(\mathbf\ranking,\mathbf\tie)$ a \emph{coarse ranking}.
\end{definition}

Intuitively, each interval in the set 
$\{(\underline{\signal}^{(j)}, \bar{\signal}^{(j)})\}_{j=1}^{\infty}$  specifies a region of signals that are pooled and assigned the same (coarse) ranking.
Outside the closure of the union of these intervals, signals are ranked strictly.
In the special case where $\{(\underline{\signal}^{(j)}, \bar{\signal}^{(j)})\}_{j=1}^{\infty}$ is empty, the definition of a coarse ranking coincides with the usual definition of a strict ranking, where the agents' ranks are given by the order of their signals in the given signal profile.

Our generalization of strict rankings to coarse rankings leads to a larger class of contest rules, defined as follows. For any coarse ranking $(\mathbf\ranking,\mathbf\tie)$ and
any agent $i$, the induced (coarse ranking) contest rule is
\begin{align*}
\tilde\signalalloc_i(\signals;\mathbf\ranking,\mathbf\tie ) = 
\begin{cases}
1, & k\geq \ranking_i(\signals) + \tie_i(\signals), \\
\frac{k-\ranking_i(\signals)}{\tie_i(\signals)}, 
& k\in (\ranking_i(\signals), \ranking_i(\signals) + \tie_i(\signals)), \\
0, & k\leq \ranking_i(\signals).
\end{cases}
\end{align*}

\begin{definition}[coarse ranking contest rule]\label{def: coarse ranking contest rule}
A mapping profile $\signalallocs:\signalspace\rightarrow X$ is a \emph{coarse ranking contest rule} if there exists a coarse ranking $(\mathbf\ranking,\mathbf\tie)$ such that $\signalalloc_i(\signals)=\tilde\signalalloc_i(\signals;\mathbf\ranking,\mathbf\tie )$ for each i.
\end{definition}

The class of coarse ranking contest rules, which is a subset of the class of all mappings from the signal space to the allocation space, also includes the class of contest rules that allocate items (prizes) to agents based on the strict ranking of their signals such as all-pay contests \citep{baye1996all}.
The proof of the proposition is provided in \cref{apx:discussion}.

\begin{proposition}\label{prop:implement_via_majorization} 
In symmetric environments, any symmetric interim allocation--utility pair $(\alloc,\util)$ 
that can be implemented by a direct non-coordination mechanism
has an indirect implementation that is a randomization over the coarse ranking contests.
\end{proposition}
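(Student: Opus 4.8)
The plan is to reduce the claim to a statement about interim allocation rules that are realized as averages over coarse ranking contests, and then invoke the majorization-type characterization of \citet{KMS2021extreme}. Fix a symmetric interim allocation--utility pair $(\alloc,\util)$ that is implementable by a contest. By \cref{def:implementable by contest}, there is a contest rule $\signalallocs$ and a symmetric signal strategy $\reportsignal:\Theta\to\signalspace$ such that $\alloc(\type)=\expect[\type_{-i}]{\signalalloc_i(\reportsignal(\type),\reportsignal(\type_{-i}))}$ and the incentive constraints hold. First I would extract from $\reportsignal$ the structure of the signal distribution: since $\alloc$ is monotone (we may restrict to monotone allocations by \cref{thm:monotone optimal general}) and \eqref{eq:IC} forces $\util'\in[0,\ability]$ with $\util=\alloc$ wherever $\util'<\ability$, the effort $\effort(\reportsignal(\type),\type)=\alloc(\type)-\util(\type))/\ability$ is pinned down, so the pushforward distribution $H$ of signals $\reportsignal(\type)$ under $\dist$ is determined. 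The key combinatorial observation is that $\reportsignal$ is weakly increasing (this follows from monotonicity of $\alloc$ together with \eqref{eq:IC}: on any interval where $\util'=\ability$ the signal strictly increases, and on intervals where $\util'<\ability$ the agent pools signals, matching exactly the ``intervals of pooled signals'' in the definition of coarse ranking). Hence the signal map partitions $\Theta$ into intervals on which $\reportsignal$ is constant (the pooling regions, which become the intervals $\{(\underline{\signal}^{(j)},\bar{\signal}^{(j)})\}$ in the coarse ranking) and intervals on which it is strictly increasing.

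Next I would translate the implementation into interim terms. Because all agents use the same $\reportsignal$, the realized signal profile is exchangeable, and the interim allocation $\alloc(\type)$ depends only on the rank statistics of $\type$ relative to the i.i.d.\ draws of the others, coarsened according to the pooling structure. This is precisely the data a coarse ranking contest rule $\tilde\signalalloc(\cdot;\mathbf\ranking,\mathbf\tie)$ uses. So the task becomes: show that any symmetric, monotone, interim feasible $\alloc$ whose ``flat'' regions coincide with a given interval partition can be written as a convex combination (randomization) over coarse ranking contest rules induced by that partition. Within each pooling interval the contest must assign the same expected allocation to all types; within each strictly-increasing region the allocation is strictly increasing in type. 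I would apply the extreme-point characterization of \citet{KMS2021extreme}: the extreme points of the set of symmetric interim feasible allocation rules consistent with a prescribed coarsening are exactly the deterministic coarse ranking contests (the ``nested'' or threshold structures), and by Choquet/Krein--Milman any point in this convex, compact set is a mixture of extreme points. Feasibility of $\alloc$ — guaranteed by \eqref{eq:feasibility symmetric} together with the fact that $\alloc$ majorizes nothing worse than $\efficientalloc$ — is exactly the condition needed to place $\alloc$ inside that convex hull.

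Concretely, the steps in order: (i) show $\reportsignal$ is weakly monotone and read off its interval structure from \eqref{eq:IC}; (ii) pass to the interim/reduced-form description and observe that the induced allocation is a function of the coarse rank statistic; (iii) identify the relevant convex compact set of symmetric interim feasible allocations that agree with $\alloc$ on the coarsening, and invoke \citet{KMS2021extreme} to identify its extreme points as deterministic coarse ranking contests; (iv) apply Krein--Milman / a Choquet-type representation to write $\alloc$ as a randomization over these extreme points; (v) check that this randomization, together with the originally-pinned-down utility $\util$, constitutes a valid coarse ranking contest implementation (the utility side is automatic by \cref{cor:payoff equiv}, since $\util$ is uniquely determined by $\alloc$ and $\util(\underline{\type})$, and the randomization does not change the interim allocation). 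The main obstacle I anticipate is step (iii): correctly defining the convex set so that its extreme points are genuinely coarse ranking contests — one must verify that the pooling intervals of $\alloc$ can be exactly matched by the $\bar{\signal}^{(j)}$-thresholds in the definition of $\tilde\signalalloc$, and that the Border-type feasibility inequalities \eqref{eq:feasibility symmetric} are precisely the constraints cutting out this set, so that no interim feasible $\alloc$ lies outside the convex hull of the coarse ranking contests. Handling ties and the half-open versus open interval bookkeeping in the definition of coarse ranking is a technical nuisance but not a real difficulty, since tie sets have measure zero under $\dist$.
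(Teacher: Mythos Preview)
Your high-level plan---invoke the extreme-point characterization of \citet{KMS2021extreme} and then a Krein--Milman/Choquet representation---is the same as the paper's. But two aspects of your execution are off.

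First, the paper does \emph{not} work in type space. It changes variables to signal space: the monotone signal recommendation $\reportsignal$ pushes $\dist$ forward to a signal distribution $\hat\dist$, and the Border-type feasibility constraint \eqref{eq:feasibility symmetric} carries over to the contest rule $\signalalloc$ viewed as an interim allocation on signals. KMS is then applied directly to $\signalalloc$ as a monotone feasible rule on signal space, where the definition of a coarse ranking contest (\cref{def: coarse ranking contest rule}) actually lives. Your plan keeps $\alloc$ as a function of type; even if your decomposition went through, you would obtain a mixture of ``coarse rankings'' on types, and you would still owe an argument that this corresponds to a genuine coarse ranking contest rule on signals.

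Second---and this is the more substantive gap---your step (iii) fixes the coarsening in advance (``the convex set of allocations that agree with $\alloc$ on the coarsening'') and then looks for extreme points inside that restricted set. This is not how the KMS decomposition works. The extreme points of the \emph{full} set of monotone interim feasible rules are the coarse ranking contests, each with its \emph{own} coarsening; a generic $\alloc$ is a mixture over many extreme points with \emph{different} coarsenings, not a mixture of contests sharing a single partition read off from the flat regions of $\alloc$ or the pooling regions of $\reportsignal$. For instance, a strictly increasing $\alloc\neq\efficientalloc$ has no pooling at all, yet its KMS decomposition necessarily involves extreme points with nontrivial coarsenings. So steps (i)--(ii), where you carefully extract a specific interval partition from the IC structure, are unnecessary, and step (iii) as written would not recover the correct set of extreme points. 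Drop the restriction to a fixed coarsening, move to signal space, and apply KMS to the full feasible set there---that is the paper's argument.
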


\section{Discussions}
\label{sec:discussions}
In this section, we establish a partial converse to our main result, showing that coordination mechanisms can outperform non-coordination mechanisms when agents' utility functions satisfy a strictly decreasing certainty equivalence condition.
We show that in our model, it is possible for the principal to implement a non-monotone allocation, and we identify sufficient conditions under which allowing non-monotone allocations does not alter our main conclusions.

\subsection{Sub-optimality of Non-coordination}
We now present a partial converse to \cref{thm:monotone optimal general}, showing that if Assumption \ref{asp:monotone_certainty_equivalence} is violated, then direct mechanisms can strictly outperform non-coordination mechanisms.
Specifically, when agents' certainty equivalence is strictly decreasing in type, coordination through stochastic signal recommendations can yield strictly higher utility for at least one type, even when the underlying allocation rule remains unchanged.

\begin{assumption}[Decreasing certainty equivalence]\label{asp:decreasing_certainty_equivalence}
For any non-degenerate distribution $G$ over allocations and signals and for any agent $i$, his certainty equivalence $\ce_i(G,\type)$ is strictly decreasing in $\type$. 
\end{assumption}

\begin{proposition}[Sub-optimality of non-coordination]
\label{prop:monotone optimal general}
Suppose \cref{asp:monotonicity},\ref{asp:weak_single_crossing}, and \ref{asp:decreasing_certainty_equivalence} hold.
Fix any $(\allocs,\utils)$  that can be implemented by a non-coordination mechanism and that $\allocs$ is strictly increasing.
Then there exists a distribution $\boldsymbol{H}$ and  $(\allocs\primed,\utils\primed)$ that can be implemented by a direct mechanism with stochastic  signal recommendation. Moreover, $\allocs\primed=\allocs$ and 
$\util\primed_i(\type_i) \geq \util_i(\type_i)$ for any $i$ and type $\type_i$, where the inequality is strict for at least one type. 
\end{proposition}
Note that \cref{prop:monotone optimal general} only implies that non-coordination mechanisms can be improved upon by coordination mechanisms with stochastic signal recommendation. 
The result is silent on which coordination structure is optimal for the implementation. 
For example, in \cref{app:nonsep:certification}, we introduce a special class of coordination mechanism, the ex post certification mechanism, which serves as an analog of the winner-pays-bid mechanism. 
Establishing the optimality of the ex post certification mechanism would require further structures on the utility functions of the agents beyond the simple strictly increasing certainty equivalence condition. 
We leave this as an interesting open question.

\subsection{Non-Monotone Allocations}
We focus on implementing monotone allocations, motivated by both practical and ethical fairness considerations \citep[see, e.g.,][]{roth2002economist,gershkov2022optimal}. Monotonicity offers a transparent and easily interpretable structure that aligns with widely held normative intuitions: individuals who demonstrate greater merit, exert more effort, or exhibit greater need should not be disadvantaged relative to others. This principle resonates in domains such as college admissions, where stronger academic records are expected to improve admission chances, and in means-tested programs, where lower incomes should yield more generous benefits \citep{budish2011combinatorial}. In such contexts, monotonicity not only enhances predictability but also reinforces institutional legitimacy by aligning with public expectations and reducing the risk of perceived arbitrariness or political backlash.

In contrast, non-monotone rules are harder to justify in public-facing mechanisms where transparency and accountability are critical. They may conflict with intuitive fairness norms and lead stakeholders to view the process as opaque or capricious. When improved qualifications yield worse outcomes, this can generate confusion, resentment, and erode trust, even when such rules are defensible within a formal model.

Nonetheless, from a theoretical standpoint, it is interesting to explore the boundaries of implementability and consider when non-monotone allocation rules can arise. The following example demonstrates that non-monotone allocations may be implementable.

\begin{example}\label{exp:non_monotone}
Consider a simplified single-agent setting where the agent's type $\type$ is drawn from a uniform distribution on $[0,2]$,
and the agent's utility function is $\signalalloc-(\signal-\type)^+$. 
Consider the following two options: 

The first option has a deterministic signal recommendation: The agent receives an item with probability $\frac{1}{2}$ if the generated signal is $\frac{2}{3}$. The second option has a random signal recommendation: With probability $\frac{1}{2}$, the agent receives an item with probability $\frac{1}{2}$ if the generated signal is $\frac{1}{2}$, 
and with the remaining probability, the agent receives an item with probability $1$ if the generated signal is $2$.

It is easy to verify that if the agent has type $\type\in[\frac{1}{3},\frac{3}{2}]$, the agent prefers the deterministic signal recommendation, and 
if the agent has type $\type\in[0,\frac{1}{3}]\cup[\frac{3}{2},2]$, the agent prefers the randomized signal recommendation. This results in an implementable non-monotone allocation.
\end{example}

We can extend this example to a multi-agent setting, showing that when the principal leverages cross-agent information to generate signal recommendations, she can implement a broader class of allocation rules, including some that are non-monotone. However, implementing non-monotone allocations may be suboptimal from the principal's perspective. In \cref{sub: general allocations}, we demonstrate that under an additional assumption, any implementable non-monotone allocation rule is dominated by a monotone allocation rule, in the sense that implementing the latter via a non-coordination mechanism yields weakly higher utility for all types. Furthermore, when the principal favors assortative matching that assigns higher allocations to higher types, then implementing the monotone allocation rule using a non-coordination mechanism strictly improves the principal's objective.

\subsection{Optimality of Monotone Allocations}\label{sub: general allocations}
As illustrated in \cref{exp:non_monotone}, certain non-monotone interim allocations can be implemented via direct mechanisms.
This arises when agents' utility functions fail to satisfy the single-crossing property under stochastic (lottery-based) signal recommendations.

We introduce ``no concave crossing'', an  assumption that regulates the agent's preference on deterministic signal recommendation and stochastic signal recommendation.
Under this assumption, we can show that any non-monotone allocation rule that can be implemented by a direct mechanism leaves each agent with a lower utility.
In other words, we can always find another monotone allocation rule, such that it can be implemented by a direct non-coordination mechanism that leaves each agent with a higher utility.

\begin{assumption}[no-concave-crossing]
\label{asp:no_concave_crossing}
For any agent $i$, consider any allocation and deterministic signal recommendation $(\signalalloc_i,\signal_i)$, and any allocation and stochastic signal recommendation $(\signalalloc'_i,\dista_i)$, 
if there exist two types $\hat{\type}_i<\hat{\type}'_i$ such that
\begin{align*}
\expostutil_i(\signalalloc_i,\signal_i,\hat{\type}_i) &\leq \expect[\signal\sim\dista_i]{\expostutil_i(\signalalloc'_i,\signal,\hat{\type}_i)}\\
\expostutil_i(\signalalloc_i,\signal_i,\hat{\type}'_i)  &\leq \expect[\signal\sim\dista_i]{\expostutil_i(\signalalloc'_i,\signal,\hat{\type}'_i)},
\end{align*}
then for any $\type_i \in [\hat{\type}_i, \hat{\type}'_i]$, we have the following
$\expostutil_i(\signalalloc_i,\signal_i,\type_i)  \leq \expect[\signal\sim\dista_i]{\expostutil_i(\signalalloc'_i,\signal,\type_i)}$.
\end{assumption}

Note that the above condition compares any deterministic outcome with any lottery. 
Our condition is weaker than the one studied in \citet{kartik2024single}, which considers two arbitrary lotteries. 
Examples satisfying our `` no concave crossing'' condition includes $\expostutil_i(\signalalloc_i,\signal_i,\type_i) = v_i(\signalalloc_i,\type_i)- c_i(\signal_i,\type_i)$ when $c_i(\signal_i,\type_i) = (\signal_i-\type_i)^+$ 
and  $c_i(\signal_i,\type_i) = (\signal_i-\type_i)^2$.

\begin{theorem}\label{thm:convex_monotone}
Under Assumptions \ref{asp:no_concave_crossing}, 
for any interim allocation–utility pair $(\allocs,\utils)$ 
that can be implemented by a direct mechanism, 
there exists another interim allocation–utility pair $(\allocs\primed,\utils\primed)$ such that (1) it can be implemented by a non-coordination mechanism; (2)  $\allocs\primed$ is a monotone allocation rule; (3) $\util_i\primed(\type_i)\geq \util_i(\type_i)$ for all $i$.
\end{theorem}

The key idea of the proof is to construct a direct non-coordination mechanism that implements the monotone rearrangement of a given (possibly non-monotone) allocation rule, denoted  $\alloc_i\primed$. We then show that this rearranged allocation leads to weakly higher utility for all types and strictly higher utility for at least one type.
Formally, given any non-monotone allocation rule $\allocs$, define its monotone rearrangement $\allocs^\dagger$ as follows. For each $i$, let 
$
G_i(z) = \int_{\underline{\theta}}^{\overline{\theta}} \mathbf{1}\{ Q_i(\theta) \leq z \} \, dF_i(\theta)
$ be the distribution function of the original allocation rule.
The \textbf{monotone rearrangement} (or quantile transform) of \( Q_i \) is defined by:

\[
Q_i^\dagger(\theta) = G_i^{-1}(F_i(\theta)),
\]
where \( G_i^{-1}(u) = \inf\{ z \in \mathbb{R} : G_i(z) \geq u \} \) is the generalized inverse of the distribution function \( G_i \).

Recall that the principal's objective is to maximize a weighted average of agents' utilities, possibly combined with other goals such as matching efficiency.
To compare alternative allocation rules, we consider a reduced-form objective that includes both: the total matching efficiency
$$\sum_{i=1}^n \expect[\type_i\sim\dist_i]{W_i(\alloc_i(\type_i),\type_i)}$$
and the weighted average of agents' utilities. If the principal's objective favors assortative matching, that is, the total efficiency is higher under the monotone rearrangement $\allocs^\dagger$, then implementing $\allocs^\dagger$ via a direct non-coordination mechanism strictly improves upon implementing the original non-monotone allocation rule.

\section{Conclusions}
\label{sec:conclusion}
In this paper, we study the design of optimal screening mechanisms for allocating limited resources among multiple agents based on costly signals. We show that, to minimize signaling costs, it is optimal not to coordinate agents' signaling choices using others' reports. That is, non-coordination mechanisms, where each agent's recommendation depends only on their own type, maximize agents' utilities while implementing a given allocation rule.

This finding not only identifies a novel welfare gain in screening environments with endogenous costs but also has broader implications for mechanism design. In particular, the optimality of non-coordination significantly reduces the complexity of the design space. By ruling out mechanisms that rely on cross-agent coordination, it simplifies the search for optimal solutions and offers a tractable path forward for practical implementation.

These insights open several promising directions for future research. One is to explore the structure of optimal mechanisms under more general or heterogeneous cost functions, including those that capture non-standard preferences or domain-specific constraints. Another is to study how non-coordination principles extend to dynamic or multi-stage environments, or interact with equity and redistribution goals. Such investigations may yield deeper theoretical characterizations and inform the design of robust, low-cost allocation mechanisms in education, procurement, housing, and beyond.

\appendix

\section{Proof of \cref{thm:monotone optimal general}}
\label{apx:main thm}
To simplify the exposition, we assume that the type space is finite.
We denote the type space by $\Theta_i=\{\hat{\type}^{(0)}_i,\dots , \hat{\type}^{(m)}_i\}$, with $\hat{\type}^{(0)}_i < \dots < \hat{\type}^{(m)}_i$. 
The extension to continuous types can be shown via standard discretization tricks, with details provided in \cref{subapx:continuous}.

Fixing any direct mechanism that delivers the interim allocation $\allocs$ and the interim utility~$\utils$, let $G_i^{(k)}$ be the distribution over the allocations and signals recommended to the agent $i$ who reports as type $\hat{\type}^{(k)}_i$.
Let $\alloc_i^{(k)}\in[0,1]$ be the probability of  allocating the item to agent $i$ induced by $G_i^{(k)}$.
Denote the certainty equivalent signal of $G_i^{(k)}$ for type $\hat{\type}^{(k)}_i$ by 
\begin{align*}
\signal_i^{(k)} \triangleq \ce(G_i^{(k)},\hat{\type}^{(k)}_i), \quad 0\leq k\leq m.
\end{align*}

As illustrated in the main text, the key proof idea is to construct a mechanism that prescribes deterministic signal recommendations for agents and show that this constructed mechanism is implementable by a non-coordination mechanism and improves agents' utilities. 
The key idea for the construction is to find a deterministic signal recommendation that prevents local upward deviation.

Denote the signal recommended to type $\hat{\type}^{(k)}_i $ by
$\hat\signal_i^{(k)}$, for all $0\leq k\leq m$. 
We proceed inductively.
Let the signal recommended to type $\hat{\type}^{(0)}_i $ be
$\hat\signal_i^{(0)} := \signal_i^{(0)}$.
For all $k\in \{1,\dots,m\}$  and all $i\in \{1,\dots,n\}$, let $\hat{\signal}_i^{(k)}$, the signal recommended to type $\hat{\type}^{(k)}_i $,  be the solution of
\begin{align*}\label{eq:construction}\tag{construction}
\util_i\rbr{\hat{\type}^{(k-1)}_i;\alloc^{(k)}_i,\hat{\signal}_i^{(k)}} = 
\util_i\rbr{\hat{\type}^{(k-1)}_i;\alloc^{(k-1)}_i,\hat{\signal}_i^{(k-1)}}. 
\end{align*}

Let $\util_i\primed(\hat{\type}^{(k)}_i):=\util_i\rbr{\hat{\type}^{(k)}_i;\alloc^{(k)}_i,\hat{\signal}_i^{(k)}}$.
By construction, we have $\util_i\primed\rbr{\hat{\type}^{(0)}_i}=\util_i\rbr{\hat{\type}^{(0)}_i}$.

It remains to show that (1) $(\allocs,\utils\primed)$ can be implemented as a direct non-coordination mechanism (\textbf{Step 1}); and (2) $\util_i\primed(\hat{\type}^{(k)}_i)\geq\util_i(\hat{\type}^{(k)}_i)$ for all $i$ and $k$ (\textbf{Step 2}). 

\paragraph{Step 1:}
Recall that, as defined in \cref{def:implementability}, to show implementability, we need to show consistency and incentive compatibility. 
First, we show consistency.
Given that $\allocs$ is an interim allocation, there exists some ex post allocation rule $\mechallocs:\Typespace\times\Signalspace\to \allocspace$ that induces $\allocs$.
By construction, $\hat{\signal}_i^{(k)}$ only depends on $\hat{\type}^{(k)}_i$.
Notice that $\hat{\signal}_i^{(k)}$ is weakly increasing in~$k$ because $\alloc_i^{(k)}$ is weakly increasing in $k$ and $\expostutil_i(x_i,s_i,\type_i)$ is weakly increasing in $x_i$ and weakly decreasing in $s_i$ (\cref{asp:monotonicity}).
Let $\type_i(\signal)$ be the inverse function of $\hat{\signal}_i^{(k)}$. It is well-defined.
Let $x_i(\signals) := \mechalloc_i(\type_1(\signal_1), \dots, \type_n(\signal_n), \signals)$.
Hence, by construction, $\allocs$ is induced by a signal-based allocation rule, and $\hat{\signal}_i^{(k)}$ is a mapping from $\typespace_i$ to $\signalspace$. Hence, consistency holds.
Second, we show that incentive compatibility holds. By construction, there is no incentive for local upward deviation. Given that each agent's utility is weakly increasing in type (\cref{asp:monotonicity}) and \eqref{eq:construction}, there is no incentive for local downward deviation.
Given the single crossing property (\cref{asp:weak_single_crossing}), local incentive compatibility ensures global incentive compatibility.
Therefore, $(\allocs,\utils\primed)$ can be implemented as a direct non-coordination mechanism.

\paragraph{Step 2:} 
To show improvement in utility, the key is to show that the deterministic signal recommendation for each type in the constructed mechanism is lower than the certainty equivalent signal given the stochastic recommendation in the original mechanism.
We prove by induction. Fix any $i$, suppose the following induction hypothesis holds.
\begin{align*}\label{eq:induction}\tag{induction}
\util_i\primed\rbr{\hat{\type}^{(k-1)}_i}\geq\util_i\rbr{\hat{\type}^{(k-1)}_i}
\end{align*}

\begin{claim}\label{claim}
    Since $(\allocs,\utils)$ can be implemented as a direct  mechanism, we must have the following incentive compatibility condition holds.
\begin{align*}\label{eq:incentive}\tag{incentive}
\util_i\rbr{\hat{\type}^{(k-1)}_i}\geq \util_i\rbr{\hat{\type}^{(k-1)}_i;\alloc^{(k)}_i,\signal_i^{(k)}}.
\end{align*}
\end{claim}

Let us take a detour and understand the remaining step intuitively using \cref{fig:intuition_coordination}.
Suppose $\hat{\type}^{(k-1)}_i$ is the type $\tilde{\type}_i$ and $\hat{\type}^{(k)}_i$ is the type $\hat{\type}_i$ in \cref{fig:intuition_coordination}.
Intuitively, \eqref{eq:incentive} guarantees that the red curve is above the blue curve at $\tilde{\type}_i$.
Notice that \eqref{eq:construction} ensures that each type is indifferent between local upward deviation and truthful reporting in the constructed candidate mechanism. 
Intuitively, \eqref{eq:construction} means that the value of the green curve at $\tilde{\type}_i$ equals type $\tilde{\type}_i$'s expected utility from truthful reporting.
Together with \eqref{eq:induction}, we can infer that the green curve is above the red curve at type $\tilde{\type}_i$.
Hence, the green curve is above the blue curve at type $\tilde{\type}_i$.
This implies that
$\hat{\signal}^{(k)}_i\leq \signal^{(k)}_i$.
 In other words, the deterministic signal recommendation for each type in the constructed mechanism is lower than the certainty equivalent signal given the stochastic recommendation in the original mechanism.
As argued in the main text, the red and blue curves intersect at type $\hat{\type}_i$. These together imply that the green curve is above the red and blue curves at type $\hat{\type}_i$.

Formally, combining \eqref{eq:construction}, \eqref{eq:induction}, and \eqref{eq:incentive}, we have
\begin{align*}
\util_i\rbr{\hat{\type}^{(k-1)}_i;\alloc^{(k)}_i,\hat{\signal}_i^{(k)}}\geq \util_i\rbr{\hat{\type}^{(k-1)}_i;\alloc^{(k)}_i,\signal_i^{(k)}}.
\end{align*}
This implies that
$\hat{\signal}^{(k)}_i\leq \signal^{(k)}_i$. In other words, the deterministic signal recommendation for each type in the constructed mechanism is lower than the certainty equivalent signal.
As a result, this improves each type's utility in the constructed mechanism:  $\util_i\primed(\hat{\type}^{(k)}_i)\geq\util_i(\hat{\type}^{(k)}_i)$.

Induction implies that $\util_i\primed(\hat{\type}^{(k)}_i)\geq\util_i(\hat{\type}^{(k)}_i)$ for all $i$ and $k$, 
and combining it with Step 1, \cref{thm:monotone optimal general} holds.

\begin{proof}[Proof of \cref{claim}]
We use $H_i^{(k)}$ to denote
the marginal distribution over allocation given to agent $i$ who reports type $\hat{\type}^{(k)}_i$.
Notice that the mean of $H_i^{(k)}$ is $\alloc_i^{(k)}$.
Denote the certainty equivalent signal of $G_i^{(k)}$ for type $\hat{\type}^{(j)}_i$ by 
\begin{align*}
\signal_i^{(k,j)} \triangleq \ce(G_i^{(k)},\hat{\type}^{(j)}_i), \quad 0\leq k,j\leq m.
\end{align*}
Moreover, let $\signal_i^{(k)}:=\signal_i^{(k,k)}$, for all $0\leq k\leq m$.

    First we replace the randomized signal recommendation with its certainty equivalence $\signal_i^{(k)}$ for each type $\hat{\type}^{(k)}_i$ while maintaining the same allocation rule.
We next show that when type $\hat{\type}^{(k-1)}_i$ misreports as the adjacent type $\hat{\type}^{(k)}_i$, his expected utility is weakly lower after the replacement than the one before the replacement.\

Type $\hat{\type}^{(k-1)}_i$'s utility for misreporting as type $\hat{\type}^{(k)}_i$ in the direct mechanism is at least\footnote{Notice that the agent can potentially adopt double deviation strategies in the direct mechanism based on the signal realizations, i.e., after misreporting, the agent can walk away based on the signal realization.} 
\begin{align}\label{lower rent}
\util_i\rbr{ \hat{\type}^{(k-1)}_i;G_i^{(k)}} 
&=\util_i\rbr{\hat{\type}^{(k-1)}_i;H^{(k)}_i, \signal_i^{(k,k-1)}}
\geq \util_i\rbr{\hat{\type}^{(k-1)}_i;H^{(k)}_i, \signal_i^{(k)}},
\end{align}
where the equality follows from the definition of certainty equivalent signal,
and the inequality holds since (1) $\signal_i^{(k,j)}$ is increasing in $j$ by \cref{asp:monotone_certainty_equivalence}; and (2) the utility function is weakly decreasing in the signal.
 By incentive compatibility in the original mechanism, we have
\begin{align}\label{IC in original mech}
\util_i\rbr{ \hat{\type}^{(k-1)}_i} :=\util_i\rbr{ \hat{\type}^{(k-1)}_i;G_i^{(k-1)}} 
\geq\util_i\rbr{\hat{\type}^{(k-1)}_i;G_i^{(k)}}.
\end{align}
Combining \eqref{lower rent} and \eqref{IC in original mech}, we conclude that
\begin{align}\label{local IC in certainty equivalence}
\util_i\rbr{\hat{\type}^{(k-1)}_i}
\geq \util_i\rbr{\hat{\type}^{(k-1)}_i;H^{(k)}_i, \signal_i^{(k)}}.
\end{align}
That is, type $\hat{\type}^{(k-1)}_i$ has no incentive to deviate to the higher adjacent type after we replace the signal recommendation with the certainty equivalent signal of the corresponding stochastic recommendation in the original mechanism. 
Because each agent's utility admits an expected utility representation, we have $\util_i(\type;G_{\allocspace},\signal):= \alloc \cdot\expostutil_i\rbr{1,\signal,\type}+ (1- \alloc )\cdot\expostutil_i\rbr{0,\signal,\type}=\util_i(\type;\alloc,\signal)$, where $\alloc$ is the mean of $G_{\allocspace}$.
Hence \eqref{local IC in certainty equivalence} implies \eqref{eq:incentive}.
\end{proof}

\subsection{Continuous Types}\label{subapx:continuous}

\providecommand{\mesh}{\operatorname{mesh}}
In this section, we prove \cref{thm:monotone optimal general} under continuous type space. 
We can utilize our constructive proof in the finite type space to construct a superior candidate non-coordination mechanism in the continuous type space. The key challenge is to construct the signal recommendation function, since the inductive argument no longer works in the continuous type space. To circumvent this, our idea is to use step functions to approximate the signal recommendation function (\textbf{Step 1 \& 2}). When we discretize the type space, we can construct the signal recommendation function on the discretized grids as in the proof of Theorem 1 (\textbf{Step 1}). For every type that is not at the discretized grids, we can assign the signal recommendation for the rightmost grid smaller than this type as its signal recommendation. This way, we can construct a sequence of increasing step functions as an approximation to the signal recommendation function in the continuous type space, whose limit exists (\textbf{Step 2}). Correspondingly, we can also obtain the limit of the utility functions in this construction. This limit gives us a candidate mechanism. It remains to check whether incentive compatibility holds in the limit. Given the smoothness in the utility functions, we can ensure that IC holds in the limit (\textbf{Step 3}). This completes our construction of a candidate mechanism, the limit of the discretized approximation, that is an incentive-compatible non-coordination mechanism that outperforms any given non-coordination one.

Now, let us begin our formal proof.
Since both $\alloc_i(\cdot)$ and $\util_i(\cdot)$ are weakly increasing, each has at most countably many discontinuity points. We first define the discretizations based on the discontinuity points.

For each $m\ge 1$, choose a finite set
$\Theta_i^{(m)}=\{\hat{\type}^{(m,0)}_i<\hat{\type}^{(m,1)}_i<\cdots<\hat{\type}^{(m,K_m)}_i\}\subseteq\Theta_i$
such that $\Theta_i^{(m)}\subseteq \Theta_i^{(m+1)}$, and for any $m$,
\begin{enumerate}
\item $\max_{k=1,\dots,K_m}\big(\hat{\type}^{(m,k)}_i-\hat{\type}^{(m,k-1)}_i\big)\le \frac{1}{m}$;
\item every discontinuity point of $\alloc_i(\cdot)$ and $\util_i(\cdot)$ belongs to $\Theta_i^{(m)}$ for all sufficiently large $m$
(e.g.\ enumerate discontinuities and ensure the first $m$ are in $\Theta_i^{(m)}$).
\end{enumerate}
Define the left-rounding map $r_i^{(m)}:\Theta_i\to\Theta_i^{(m)}$ by
\[
r_i^{(m)}(\type_i):=\max\{\hat{\type}_i\in\Theta_i^{(m)}:\ \hat{\type}_i\le \type_i\}.
\]
Then $r_i^{(m)}(\type_i)\rightarrow \type_i$ as $m\rightarrow\infty$ and, by construction of $\Theta_i^{(m)}$, as $m\rightarrow\infty$, we have for every $\type_i\in\Theta_i$:
\begin{equation}\label{eq:rounding_convergence_A_i}
\alloc_i(r_i^{(m)}(\type_i))\to \alloc_i(\type_i),
\qquad
\util_i(r_i^{(m)}(\type_i))\to \util_i(\type_i).
\end{equation}

\paragraph{Step 1: Constructing a deterministic mechanism on the discretized grids.}
Restrict $(\alloc_i,\util_i)$ to the finite set $\Theta_i^{(m)}$.
Apply the finite-type version of \Cref{thm:monotone optimal general} to construct an alternative mechanism $(\alloc_i,\util_i^\dagger)$. Hence, we obtain a deterministic signal recommendation function
\[
\signal_i^{(m)}:\Theta_i^{(m)}\to \mathbb R_+ \quad \text{(weakly increasing in $\hat{\type}_i$)}
\]
such that for all $\hat{\type}_i,\hat{\type}'_i\in\Theta_i^{(m)}$,
\begin{align}
\util_i^\dagger\big(\alloc_i(\hat{\type}_i),\signal_i^{(m)}(\hat{\type}_i),\hat{\type}_i\big)\ &\ge\ \util_i(\hat{\type}_i), \label{eq:grid_dom_A_i}\\
\util_i^\dagger\big(\alloc_i(\hat{\type}_i),\signal_i^{(m)}(\hat{\type}_i),\hat{\type}_i\big)\ &\ge\
\util_i^\dagger\big(\alloc_i(\hat{\type}'_i),\signal_i^{(m)}(\hat{\type}'_i),\hat{\type}_i\big). \label{eq:grid_IC_A_i}
\end{align}

\paragraph{Step 2: Extending the construction to the whole type space.}
Extend $\signal_i^{(m)}$ to all $\Theta_i$ by
\[
\signal_i^{(m)}(\type_i):=\signal_i^{(m)}(r_i^{(m)}(\type_i)),
\]
which is weakly increasing in $\type_i$. Notice that $r_i^{(m)}(\type_i)$ is weakly increasing in $m$, because as the grids get finer and finer, the left-rounding of each type weakly increases. Given that $\signal_i^{(m)}(\cdot)$ is also weakly increasing in $\type_i$, we have $\signal_i^{(m)}(\type_i)$ increasing in $m$.
Since every monotone sequence is convergent, as $m$ goes to infinity, we can define the pointwise limit of the signal recommendation as $\signal_i(\type_i)$.

\paragraph{Step 3: Verifying IC and utility dominance in the limit.}
Fix $\type_i,\type_i'\in\Theta_i$ and set $\hat{\type}_{i,m}:=r_i^{(m)}(\type_i)$ and $\hat{\type}'_{i,m}:=r_i^{(m)}(\type_i')$.
Apply \eqref{eq:grid_IC_A_i} with $(\hat{\type}_i,\hat{\type}'_i)=(\hat{\type}_{i,m},\hat{\type}'_{i,m})$:
\[
\util_i^\dagger\big(\alloc_i(\hat{\type}_{i,m}),\signal_i^{(m)}(\hat{\type}_{i,m}),\hat{\type}_{i,m}\big)
\ \ge\
\util_i^\dagger\big(\alloc_i(\hat{\type}'_{i,m}),\signal_i^{(m)}(\hat{\type}'_{i,m}),\hat{\type}_{i,m}\big).
\]
Take $m\to\infty$.
Using $\hat{\type}_{i,m}\to\type_i$, $\hat{\type}'_{i,m}\to\type_i'$, the convergence in \eqref{eq:rounding_convergence_A_i},
the observation that 
$\signal_i^{(m)}(\hat{\type}_{i,m})\to \signal_i(\type_i)$,
$\signal_i^{(m)}(\hat{\type}'_{i,m})\to \signal_i(\type_i')$,
and the continuity of the utility function, we obtain
\begin{equation}\label{eq:limit_IC_A_i}
\util_i^\dagger\big(\alloc_i(\type_i),\signal_i(\type_i),\type_i\big)\ \ge\ \util_i^\dagger \big(\alloc_i(\type_i'),\signal_i(\type_i'),\type_i\big)
\qquad \forall \type_i,\type_i'\in\Theta_i.
\end{equation}
Thus, the mechanism $\type_i\mapsto(\alloc_i(\type_i),\signal_i(\type_i))$ is incentive compatible.

Similarly, apply \eqref{eq:grid_dom_A_i} at $\hat{\type}_i=\hat{\type}_{i,m}$ and take $m\to\infty$ to get
\begin{equation}\label{eq:limit_dom_A_i}
\util_i^\dagger\big(\alloc_i(\type_i),\signal_i(\type_i),\type_i\big)\ \ge\ \util_i(\type_i)\qquad \forall \type_i\in\Theta_i.
\end{equation}
Equations \eqref{eq:limit_IC_A_i}--\eqref{eq:limit_dom_A_i} complete the continuous-type extension.

\section{Proof of \cref{prop:implement_via_majorization}}
\label{apx:discussion}

We first recap the definition and the characterizations of feasible interim allocation rules from \citet{border1991implementation} and \citet{che2013generalized}.  
\begin{definition}[feasibility]
Given any interim allocation rule $\allocs=(\alloc_i)_{i=1}^n$ where $\alloc_i:\Theta_i\rightarrow [0,1]$, 
$\allocs$ is \emph{feasible} if there exists an ex-post allocation rule $\contestallocs$
such that 
$\alloc_i(\type_i) = \expect[\type_{-i}]{\contestalloc_i(\type_i, \type_{-i})}$ for any agent~$i$ and type $\type_i$.
We say an interim allocation--utility pair $(\allocs,\utils)$ is \emph{feasible} if $\allocs$ is feasible. 
\end{definition}

\begin{lemma}[\citealp{che2013generalized}]
\label{lem:border}
Given a set $\boldsymbol{A}=\prod_{i=1}^nA_i\subset \typespace$, let $w(\types,\boldsymbol{A}) = |\lbr{i : \type_i\in A_i}|$ be the number of agents whose type $\type_i$ is in $A_i$.\footnote{Here, $|\cdot|$ denotes the cardinality of a set.}
The interim allocation rule $\allocs$ is interim feasible if and only if 
\vspace{-8pt}
\begin{align*}
\tag{IF}\label{eq:feasibility}
\sum_i \int_{A_i}  \alloc_i(\type_i) \dd \dist_i(\type_i)
\leq \int_{A} \min\lbr{k,w(\types,\boldsymbol{A})} \dd \dists(\type)
\qquad \forall \boldsymbol{A}=\prod_{i=1}^nA_i\subset \typespace.
\end{align*}
Moreover, for monotone allocations in symmetric environments, 
\eqref{eq:feasibility} is equivalent to the following:\footnote{In a symmetric environment, by a slight abuse of notation, we use $\dist=\dist_i$ for all $i$ to denote each agent's type distribution.}
\begin{align}\label{eq:feasibility symmetric}
\tag{$\widehat{\mathrm{IF}}$}
\int_{\type}^{\bar{\type}} \alloc(z)\dd \dist(z) \leq \int_{\type}^{\bar{\type}} \efficientalloc(z) \dd\dist(z), \quad\forall \type\in[\underline{\type},\bar{\type}],
\end{align}
where $\efficientalloc(\type)=\sum_{j=0}^{k-1}\binom{n-1}{j} 
\cdot (1-\dist(\type))^j\cdot \dist^{n-1-j}(\type)$ is the interim allocation rule for allocating $k$ items efficiently. 
\end{lemma}

\begin{proof}[Proof of \cref{prop:implement_via_majorization}]
    For any symmetric interim allocation--utility pair $(\alloc,\util)$ 
that is implementable by a non-coordination mechanism, 
by definition, there exists a mapping from the signal space to the allocation space $\signalalloc(\reportsignal)$ specifying the allocation for each agent given the generated signal $\reportsignal$.\footnote{Notice that such a mapping might not exist if $(\alloc,\util)$ is implementable by a direct mechanism that is not a non-coordination mechanism.}
The distribution over types $\dist(\type)$ induces a distribution over signals; call it $\hat\dist(\signal)$.
Similarly, a feasibility constraint on the allocation rule defined in the signal space may be induced by \eqref{eq:feasibility symmetric} based on $\hat{\dist}$.
Such operations are valid since the recommended signal is a non-decreasing function of the type. 
By Theorems 1 and 2 in \citet{KMS2021extreme}, 
any monotone feasible allocation rule
$\signalalloc$ can be written as a convex combination of the extreme points. 
Using the construction of the extreme points in Theorem 3 of \citet{KMS2021extreme}, one can easily verify that the extreme points are the coarse ranking contest rules defined in \cref{def: coarse ranking contest rule}. 
Hence $\signalalloc$ can be expressed as a randomization over the coarse ranking contest rules.
Notice that the above operations do not affect the agents' incentives by preserving the distribution over outcomes; hence the expected utility of each agent in the coarse ranking contest is still~$\util$. 
\end{proof}

\section{Proof of \cref{prop:monotone optimal general}}
Given any utility function of the agents, consider a type distribution where $H_i$ has binary support for all agent $i$, denoted by $\{\underline{\type}_i,\bar{\type}_i\}$ where $\underline{\type}_i < \bar{\type}_i$. 
Given any strictly increasing interim allocation $\alloc_i$, let $\underline{\signal}_i $ ($ \bar{\signal}_i$) be the signals recommended  to type $\underline{\type}_i$ (type $\bar{\type}_i$) in the non-coordination mechanism. 
It is easy to verify that in the optimal non-coordination mechanism, $\underline{\signal}_i = 0$ and $\underline{\signal}_i< \bar{\signal}_i$.

Since the utility function is continuous and strictly decreasing in $s$ and $\bar{\signal}_i$ is strictly positive, 
there exists a non-degenerate distribution $G_i$ over signal recommendations such that an agent with type $\bar{\type}_i$ is indifferent between signal recommendation $G_i$ and deterministic signal recommendation $\bar{\signal}_i$.\footnote{Such a non-degenerate distribution $G_i$ exists since $\bar{\signal}_i > 0$. We can construct it via a distribution with binary support $\bar{\signal}_i-\epsilon$ and $\bar{\signal}_i+\epsilon$, where $\epsilon\in(0,\bar{\signal}_i)$ and the probability of $\bar{\signal}_i+\epsilon$ is $q_{\epsilon}$. 
The strict monotonicity of utility in costly signals implies that $q_{\epsilon}$ must be an interior solution, i.e., $q_{\epsilon}\in (0,1)$. 
}
Moreover, by \cref{asp:decreasing_certainty_equivalence}, the utility of type $\underline{\type}_i$ for misreporting as type $\bar{\type}_i$ is strictly lower given $G_i$ compared to deterministic signal recommendation $\bar{\signal}_i$. 
Therefore, there exists $G'_i$ that first-order stochastically dominates $G_i$ such that the utility of type $\underline{\type}_i$ for obtaining signal recommendation $G'_i$ and $\bar{\signal}_i$ are the same. 
By offering signal recommendation $\underline{\signal}_i$ to type $\underline{\type}_i$ with interim allocation $\alloc_i(\underline{\type}_i)$, 
and offering randomized signal recommendation $G'_i$ to type $\bar{\type}_i$ with interim allocation $\alloc_i(\bar{\type}_i)$ in the general mechanism, 
the utility of all types weakly improves, and the utility of type $\bar{\type}_i$ strictly improves.

\section{Monotone Allocations}\label{appx B: general allocations}

To simplify the exposition, we assume that the  type space of agent $i$ is discrete and finite; that is, $\Theta_i=\{\hat{\type}^{(0)}_i,\dots , \hat{\type}^{(m)}_i\}$, where $\hat{\type}^{(0)}_i < \dots < \hat{\type}^{(m)}_i$.

Let $\allocs\primed$ be a monotone rearrangement of $\allocs$ that is feasible.
We will construct a $\utils\primed$ such that $(\allocs\primed,\utils\primed)$ is implementable by a non-coordination mechanism 
and show that $\util_i\primed(\hat{\type}^{(k)}_i)\geq \util_i(\hat{\type}^{(k)}_i)$ for all $i$ and all $k\in \{0,\dots,m\}$. 

When $k=0$, let
$\hat\signal_i^{(0)} := 0$.
Let $\util\primed_i(\hat{\type}^{(0)}_i)= \util_i(\hat{\type}^{(0)}_i,\alloc\primed_i(\hat{\type}^{(0)}_i), \hat\signal_i^{(0)})$.
Since $\allocs\primed$ is a monotone rearrangement of $\allocs$, we have $\alloc\primed_i(\hat{\type}^{(0)}_i) \geq \min_{\hat{\type}_i\in\Theta_i}\alloc_i(\hat{\type}_i)$.
Therefore, 
\begin{align*}
\util\primed_i(\hat{\type}^{(0)}_i)\geq \min_{\hat{\type}_i\in\Theta_i}\util_i(\hat{\type}_i,\alloc\primed_i(\hat{\type}^{(0)}_i), 0)
\geq \min_{\hat{\type}_i\in\Theta_i}\util_i(\hat{\type}_i)
\geq \util_i(\hat{\type}^{(0)}_i)
\end{align*}
where the last inequality holds since the interim utility must be monotone increasing in the type of the agent and $\hat{\type}^{(0)}_i$ is the lowest type.

Now consider any $k\geq 1$. Suppose $\alloc\primed_i(\hat{\type}^{(k)}_i) \geq \alloc_i(\hat{\type}^{(k)}_i)$.  We can apply the same argument as in \cref{thm:monotone optimal general} to construct $\util\primed_i(\hat{\type}^{(k)}_i)$.

Suppose alternatively $\alloc\primed_i(\hat{\type}^{(k)}_i) < \alloc_i(\hat{\type}^{(k)}_i)$.
Let $\bar\signal^{(k)}_i$ be the signal such that type $\hat{\type}^{(k-1)}_i$ is indifferent between truthfully reporting and misreporting as type $\hat{\type}^{(k)}_i$:
\begin{align*}
    \util\primed_i(\hat{\type}^{(k-1)}_i) = \util_i(\hat{\type}^{(k-1)}_i;\alloc\primed_i(\hat{\type}^{(k)}_i),\bar\signal^{(k)}_i).
\end{align*}
Let $\bar\signal^{(k)}_i$  be the signal recommended for typing $\hat{\type}^{(k)}_i$. 
Hence $\util\primed_i(\hat{\type}^{(k)}_i) = \util_i(\hat{\type}^{(k)}_i;\alloc\primed_i(\hat{\type}^{(k)}_i),\bar\signal^{(k)}_i).$

Following similar arguments as in \cref{thm:monotone optimal general}, we can show that $(\allocs\primed,\utils\primed)$ can be implemented by a direct non-coordination mechanism.
It remains to show that $\util\primed_i(\hat{\type}^{(k)}_i)
\geq  \util_i(\hat{\type}^{(k)}_i)$.

Since $\allocs\primed$ is a monotone rearrangement of $\allocs$, for any $i$ and any type $\hat{\type}^{(k)}_i$, there must exist $k' > k$ such that  $\alloc_i(\hat{\type}^{(k')}_i)\leq \alloc\primed_i(\hat{\type}^{(k)}_i) $.
Let  $\tilde{\signal}_i$ be the signal that type $\hat{\type}^{(k-1)}_i$ is indifferent between truthfully reporting in the direct non-coordination mechanism and consuming $(\alloc_i(\hat{\type}^{(k')}_i),\tilde{\signal}_i)$:
\begin{align}\label{eq: def of tilde s}
    \util\primed_i(\hat{\type}^{(k-1)}_i) = \util_i(\hat{\type}^{(k-1)}_i;\alloc_i(\hat{\type}^{(k')}_i),\tilde{\signal}_i).
\end{align}
Since type $\hat{\type}^{(k-1)}_i$ is indifferent between these two options, $\alloc\primed_i(\hat{\type}^{(k)}_i) \geq \alloc_i(\hat{\type}^{(k')}_i)$ implies that
$\bar\signal^{(k)}_i \geq \tilde{\signal}_i$.
Moreover,
by \cref{asp:weak_single_crossing} (wSCP), we have that for type $\hat{\type}^{(k)}_i$,
\begin{equation}\label{eq:appendix B last >}
    \util_i(\hat{\type}^{(k)}_i;\alloc\primed_i(\hat{\type}^{(k)}_i),\bar\signal^{(k)}_i)
\geq \util_i(\hat{\type}^{(k)}_i;\alloc_i(\hat{\type}^{(k')}_i),\tilde{\signal}_i).
\end{equation}

Let $\dista^{(k)}_i$ be the distribution over signal recommendations to type $\hat{\type}^{(k)}_i$ under the original mechanism that gives agent $i$ interim utility $\util_i$. 
We first establish the following two inequalities that describe the preference of type $\hat{\type}^{(k')}_i$ and type $\hat{\type}^{(k-1)}_i$.

\begin{claim}\label{claim:k'}
    Type $\hat{\type}^{(k')}_i$ prefers receiving the deterministic recommendation $(\alloc_i(\hat{\type}^{(k')}_i),\tilde{\signal}_i)$ to  $(\alloc_i(\hat{\type}^{(k)}_i),\dista^{(k)}_i)$ in the original direct mechanism:
$$\util_i(\hat{\type}^{(k')}_i;\alloc_i(\hat{\type}^{(k')}_i),\tilde{\signal}_i)
\geq \util_i(\hat{\type}^{(k')}_i;\alloc_i(\hat{\type}^{(k)}_i),\dista^{(k)}_i).$$
\end{claim}

\begin{claim}\label{claim:k-1}
    Type $\hat{\type}^{(k-1)}_i$ prefers receiving the deterministic recommendation $(\alloc_i(\hat{\type}^{(k')}_i),\tilde{\signal}_i)$ to  $(\alloc_i(\hat{\type}^{(k)}_i),\dista^{(k)}_i)$ in the original direct mechanism:
$$\util_i(\hat{\type}^{(k-1)}_i;\alloc_i(\hat{\type}^{(k')}_i),\tilde{\signal}_i)
\geq \util_i(\hat{\type}^{(k-1)}_i;\alloc_i(\hat{\type}^{(k)}_i),\dista^{(k)}_i).$$
\end{claim}

Since $\hat{\type}^{(k-1)}_i< \hat{\type}^{(k)}_i <\hat{\type}^{(k')}_i$, 
by \cref{asp:no_concave_crossing}, we know that type $\hat{\type}^{(k)}_i$ also prefers receiving the deterministic recommendation $(\alloc_i(\hat{\type}^{(k')}_i),\tilde{\signal}_i)$ to $(\alloc_i(\hat{\type}^{(k)}_i),\dista^{(k)}_i)$ in the original direct mechanism:
\begin{equation}\label{eq: k preference}
    \util_i(\hat{\type}^{(k)}_i;\alloc_i(\hat{\type}^{(k')}_i),\tilde{\signal}_i)
\geq \util_i(\hat{\type}^{(k)}_i;\alloc_i(\hat{\type}^{(k)}_i),\dista^{(k)}_i).
\end{equation}

\eqref{eq:appendix B last >} and \eqref{eq: k preference}  imply that $\util_i(\hat{\type}^{(k)}_i;\alloc\primed_i(\hat{\type}^{(k)}_i),\bar\signal^{(k)}_i)\geq \util_i(\hat{\type}^{(k)}_i;\alloc_i(\hat{\type}^{(k)}_i),\dista^{(k)}_i)$.
Notice that by construction, the left-hand side is $\util\primed_i(\hat{\type}^{(k)}_i)$ and by definition, the right-hand side is $\util_i(\hat{\type}^{(k)}_i)$.
Hence, we can conclude that 
$\util\primed_i(\hat{\type}^{(k)}_i)
\geq  \util_i(\hat{\type}^{(k)}_i)$.

\begin{proof}[Proof of \cref{claim:k'}]
    By the inductive argument $\util\primed_i(\hat{\type}^{(k-1)}_i)
\geq \util_i(\hat{\type}^{(k-1)}_i)$.
By \eqref{eq: def of tilde s}, the definition of $\tilde{\signal}_i$, we have $\tilde{\signal}_i\leq \signal_i^{(k',k)}$.
Moreover, we already know that by \cref{asp:monotone_certainty_equivalence}, $\signal_i^{(k',k)}\leq\signal_i^{(k')}$.
These together imply that
\begin{align}\label{eq: to be checked}
\util_i(\hat{\type}^{(k')}_i;\alloc_i(\hat{\type}^{(k')}_i),\tilde{\signal}_i)
\geq 
\util_i(\hat{\type}^{(k')}_i),
\end{align}
By incentive compatibility in the original direct mechanism, we have
\begin{equation}
    \util_i(\hat{\type}^{(k')}_i) \geq \util_i(\hat{\type}^{(k')}_i;\alloc_i(\hat{\type}^{(k)}_i),\dista^{(k)}_i).
\end{equation}
Combining the two, we have the desired inequality.
\end{proof}

\begin{proof}[Proof of \cref{claim:k-1}]

This is because 
\begin{align*}
\util_i(\hat{\type}^{(k-1)}_i;\alloc_i(\hat{\type}^{(k')}_i),\tilde{\signal}_i)
= \util\primed_i(\hat{\type}^{(k-1)}_i)
\geq \util_i(\hat{\type}^{(k-1)}_i)
\geq \util_i(\hat{\type}^{(k-1)}_i;\alloc_i(\hat{\type}^{(k)}_i),\dista^{(k)}_i).
\end{align*}
The first inequality holds by the induction assumption. 
The second inequality holds because $\util_i(\hat{\type}^{(k-1)}_i;\alloc_i(\hat{\type}^{(k)}_i),\dista^{(k)}_i)$
is the utility that type $\hat{\type}^{(k-1)}_i$ obtains by deviating to report type $\hat{\type}^{(k)}_i$ and always following the signal recommendation. 
\end{proof}

\section{Non-separable Utilities}
\label{apx:non_separable}

In this section, we clarify why \cref{asp:monotone_certainty_equivalence} may fail under most non-separable preferences.
Then we propose a weaker
certainty-equivalence monotonicity requirement. We show that under this weaker requirement, we can extend the comparison between any non-coordination implementation and a simple form of coordination implementations. 
A similar analysis of ranking the performance of simple mechanisms has also been investigated in the auction context under different objectives and utility models \citep[e.g.,][]{milgrom1982theory,matthews1987comparing}.

Recall that \cref{asp:monotone_certainty_equivalence} requires that for \emph{every} distribution $G$ over allocations and recommended
signals, the certainty-equivalent signal $\Sigma(G,\theta)$ is increasing in $\theta$.
This can fail even for commonly used non-separable expected-utility
representations. 
For example, under the preference that exhibits constant absolute risk aversion: $\expostutil_i(\signalalloc_i,\signal_i,\type_i) = 1-\exp(-a (\type_i\cdot \signalalloc_i - \signal_i))$ for some constant $a > 0$, \cref{asp:monotone_certainty_equivalence} is violated. 

However, we highlight that for some design questions, one does not need ICE for \emph{all} lotteries. Below, we propose a weaker monotonicity property that is sufficient to compare
\emph{non-coordination} (all-pay type formats) and \emph{ex post certification} (winner-only formats).

\subsection{Ex Post Certification Mechanisms}\label{app:nonsep:certification}

We formalize a subclass of direct mechanisms where \emph{only tentative winners} are ever
asked to produce a positive signal.

\begin{definition}[Ex post certification mechanism]\label{def:cert}
A direct mechanism $(\signalrecommends,\mechallocs)$ is an \emph{ex post certification mechanism} if there exist
a \emph{tentative winner selection rule} $\hat y:\Theta\to\allocspace$ and a
\emph{certification schedule} $\hat s:\Theta_i\to\mathbb R_+$ such that the following holds.

\begin{enumerate}
\item For every report profile $\hat\theta\in\Theta$, the mechanism draws a tentative winner
set $W$ according to $\hat y(\hat\theta)$.

\item Given $(\hat\theta,W)$, the recommendation to agent $i$ is
\[
\tilde s_i(\hat\theta,W)=
\begin{cases}
\hat s(\hat\theta_i), &\text{if } i\in W,\\
0, &\text{if } i\notin W.
\end{cases}
\]
Thus, each agent is asked to certify only when he is selected as a tentative winner.

\item The allocation rule awards the item to $i$ if and only if $i$ is a tentative winner and he
follows the recommendation, i.e., 
$y_i(\hat\theta,W,s') =
\mathbf 1\{i\in W\}\cdot \mathbf 1\{s'_i=\hat s(\hat\theta_i)\}$.
\end{enumerate}
\end{definition}

The above representation captures the idea that the designer first selects tentative
winners (potentially using cross-agent information), and then requires only tentative winners
to produce costly certificates. This is the natural analog of ``winner-pays'' formats in
standard auction theory, whereas non-coordination mechanisms correspond to ``all-pay'' formats.

\subsection{A Weaker Monotonicity Condition}\label{app:nonsep:WICE}

Note that in any ex post certification mechanism, each agent faces a lottery over signals. This lottery has binary support, and one potential signal is $0$. Therefore, to show the superiority of non-coordination mechanisms over ex post certification mechanisms, it suffices to consider binary lotteries before converting them to certainty equivalences. 
This motivates our following definition. 

\begin{assumption}[Winner-only Increasing Certainty Equivalence (WICE)]\label{ass:WICE}
A utility function $\expostutil_i$ is said to have (strict) winner-only increasing certainty equivalence if for any distribution $G$ over allocations and signals with binary support on $(0,0)$ and $(1,s)$ for some $s\in\reals_+$, his certainty equivalent signal $\ce_i(G,\type)$ is (strictly) increasing in $\type$. 
\end{assumption}

Assumption~\ref{ass:WICE} is strictly weaker than ICE (\cref{asp:monotone_certainty_equivalence}) because it only considers \emph{binary} lotteries rather than any arbitrary distribution over allocations and signals.
To better understand \cref{ass:WICE}, we provide a simple sufficient condition on the primitives such that \cref{ass:WICE} holds. 
We emphasize that this condition is \emph{not} meant to be necessary for all non-separable
utility functions; rather, it is a transparent sufficient condition tailored to the ex post
certification environment to facilitate our understanding.

\begin{lemma}[Sufficient Condition for WICE]\label{lem:WICE-sufficient-simple}
Suppose \cref{asp:monotonicity} holds. In addition, assume:
\begin{enumerate}
\item there exists a weakly decreasing function $u^0(s)$ such that $u(0,s,\type)=u^0(s)$ for all $(s,\type)$;
\item the utility function satisfies decreasing differences in $(\type,s)$ when $x=1$, i.e., 
for any $\type'>\type$ and any $s'>s$,
\begin{equation}\label{eq:DD}
u(1,s',\type')-u(1,s,\type') \;\le\; u(1,s',\type)-u(1,s,\type).
\end{equation}
\end{enumerate}
Then the utility function satisfies \cref{ass:WICE} (WICE).
\end{lemma}
The first condition in \cref{lem:WICE-sufficient-simple} states that when an agent does \emph{not} receive the
resource ($x=0$), his utility from producing a costly certificate $s$ is purely a type-independent
``wasted effort'' term. 
This is realistic in applications where the type $\type$ captures
the valuation of the resource, because the type is only relevant for utility if the agent receives the resource.
The second condition in \cref{lem:WICE-sufficient-simple} 
is a standard \emph{submodularity} (or decreasing-differences)
condition. Economically, it states that a higher certificate requirement compresses payoff
differences across types in the winning state. \footnote{If $u$ is $C^2$ in $(s,\type)$, then \eqref{eq:DD} is equivalent to
$u_{\type s}(1,s,\type)\le 0$ for all $(s,\type)$.}
Put differently, a higher certificate reduces the incremental utility
of having a higher type.

\begin{proof}[Proof of \cref{lem:WICE-sufficient-simple}]
Fix any binary distribution $G$ with support $\{(0,0),(1,\bar s)\}$ for some $\bar s\ge 0$,
and let $q:=\Pr_G(x=1)\in[0,1]$. If $q\in\{0,1\}$ or $\bar s=0$, then $\ce(G,\type)$ is
constant in $\type$ and the claim is immediate. Hence assume $q\in(0,1)$ and $\bar s>0$.

Let $c(\type):=\ce(G,\type)$. By definition, $c(\type)$ satisfies
\begin{equation}\label{eq:CE-binary}
(1-q)\,u^0\big(c(\type)\big)+q\,u\big(1,c(\type),\type\big)
=(1-q)\,u^0(0)+q\,u(1,\bar s,\type).
\end{equation}
Rearranging \eqref{eq:CE-binary} gives
\begin{equation}\label{eq:c-leq-sbar}
q\Big(u(1,c(\type),\type)-u(1,\bar s,\type)\Big)
=(1-q)\Big(u^0(0)-u^0(c(\type))\Big)\ \ge\ 0,
\end{equation}
where the inequality uses that $u^0(\cdot)$ is (weakly) decreasing. Since $u(1,\cdot,\type)$
is (weakly) decreasing, \eqref{eq:c-leq-sbar} implies $c(\type)\le \bar s$.

Now fix $\type'>\type$ and set $c:=c(\type)$. Consider the
certainty-equivalence equation for type $\type'$ evaluated at $c$:
\begin{align*}
&(1-q)\,u^0(c)+q\,u(1,c,\type')-(1-q)\,u^0(0)-q\,u(1,\bar s,\type') \\
&\qquad=
\underbrace{\Big((1-q)\,u^0(c)+q\,u(1,c,\type)-(1-q)\,u^0(0)-q\,u(1,\bar s,\type)\Big)}_{=\,0 \text{ by \eqref{eq:CE-binary}}}\\
&\qquad\quad
+\,q\Big(\big[u(1,c,\type')-u(1,c,\type)\big]-\big[u(1,\bar s,\type')-u(1,\bar s,\type)\big]\Big).
\end{align*}
By the decreasing-differences condition in the winner state, the type
premium $u(1,s,\type')-u(1,s,\type)$ is weakly decreasing in $s$. Since $c\le \bar s$, we have
\[
u(1,c,\type')-u(1,c,\type)\ \ge\ u(1,\bar s,\type')-u(1,\bar s,\type),
\]
so the last display is weakly nonnegative. Hence
\[
(1-q)\,u^0(c)+q\,u(1,c,\type') \ \ge\ (1-q)\,u^0(0)+q\,u(1,\bar s,\type').
\]
Because $u^0(\cdot)$ and $u(1,\cdot,\type')$ are decreasing, the map
$s\mapsto (1-q)u^0(s)+q u(1,s,\type')$ is decreasing, and the solution
$c(\type')$ to \eqref{eq:CE-binary} for type $\type'$ must satisfy $c(\type')\ge c(\type)$.
This proves WICE.
\end{proof}

\subsection{Comparison of Mechanisms}
Building on the definitions and techniques from this paper, we provide a transparent comparison between non-coordination mechanisms and ex post certification mechanisms under the weaker condition \cref{ass:WICE}, which is satisfied by a large class of non-separable utilities (\cref{lem:WICE-sufficient-simple}).

\begin{proposition}[Non-coordination vs Certification]
\label{prop:vs_certification}
Suppose \Cref{asp:monotonicity}, \ref{asp:weak_single_crossing} and \ref{ass:WICE} hold.
Fix any interim allocation--utility pair $(\allocs,\utils)$ such that $(\allocs,\utils)$ can be implemented by an ex post certification mechanism and $\alloc_i$ is weakly increasing for all $i$. 
Then there exists an interim allocation--utility pair $(\allocs\primed,\utils\primed)$ that can be implemented by a direct non-coordination mechanism
such that $\allocs\primed=\allocs$, and 
$
\util\primed_i(\type_i) \geq \util_i(\type_i)$ for all $ i, \type_i$. 
\end{proposition}
The proof of \cref{prop:vs_certification} is almost identical to the proof of \cref{thm:monotone optimal general}, with the exception that, here, it suffices to consider distributions over signals and allocations that have binary support from each agent's perspective. 
The details are omitted here to avoid repetition.

\newpage
\bibliographystyle{apalike}
\bibliography{references}

\end{document}